\theoremstyle{plain}
\newtheorem{theorem}{Theorem}[section]
\newtheorem{proposition}[theorem]{Proposition}
\newtheorem{lemma}[theorem]{Lemma}
\theoremstyle{definition}
\newtheorem{definition}{Definition}[section]
\newtheorem*{remark}{Remark}
\DeclareMathOperator{\supp}{\text{supp}}
\DeclareMathOperator{\U''}{\mathit{U}^{(2)}_\mathit k}
\def\@tvsp{\mathchoice{{}\mkern-4.5mu}{{}\mkern-4.5mu}{{}\mkern-2.5mu}{}}
\def\ltrivert{\left|\@tvsp\left|\@tvsp\left|}
\def\rtrivert{\right|\@tvsp\right|\@tvsp\right|}
\newcommand\lnorm[1]{\ltrivert#1\rtrivert}
\newcommand\lonenorm[1]{\ltrivert#1\rtrivert_1}
\newcommand\ltwotwonormt[2]{{\ltrivert#1\rtrivert_{2,2}^{#2}}}
\begin{document}
\title{Local solutions of RG flow equations from the Nash-Moser theorem}
\author{Edoardo D'Angelo}
\address{Dipartimento di Matematica, 
Dipartimento di Eccellenza 2023-2027,
Università di Genova, Italy}
\address{Istituto Nazionale di Fisica Nucleare -- Sezione di Genova, Italy}
\email{edoardo.dangelo@edu.unige.it}
\author{Nicola Pinamonti}
\address{Dipartimento di Matematica, 
Dipartimento di Eccellenza 2023-2027,
Università di Genova, Italy}
\address{Istituto Nazionale di Fisica Nucleare -- Sezione di Genova, Italy}
\email{pinamont@dima.unige.it}
\date{}
\maketitle

\begin{abstract}
We prove local existence of solutions of a functional Renormalisation Group equation for the effective action of an interacting quantum field theory, when a suitable Local Potential Approximation is considered. To obtain this equation in a Lorentzian setting, a quantum state for the theory is selected, and a regulator consisting in a mass is added to the action. The flow equation for mass rescalings is then studied using the renown Nash-Moser theorem.
\end{abstract}

\section{Introduction}

In his historical essay on the development of the Standard Model \cite{Weinberg2018}, Weinberg recalls how Oppenheimer used to grumble that renormalization was just a way to sweep infinities under the rug.  In the last 70 years, from a collection of heuristic procedures and techniques to extract finite predictions from the infinities arising in Quantum Field Theory (QFT), the Renormalization Group (RG) has turned into a fundamental, organising principle of modern physics, describing the emergence of macroscopic phenomena from the interactions of microscopic degrees of freedom. In particular, in the Wilsonian picture \cite{Wilson73, Wilson1975}, renormalization describes how macroscopic effects arises from the microscopic degrees of freedom.


The Functional Renormalization Group (FRG) represents one of the modern implementations of the Wilsonian RG \cite{Berges2000, Dupuis2021, Pawlowski2005, Wetterich1992}. In this approach, correlation functions depends on the insertion of a regulator term that usually acts as a momentum-dependent mass $Q_k$, suppressing modes with frequencies higher than some energy scale $k$. The most convenient way of describing the effective theory at some scale $k$ is through the \textit{effective average action} $\Gamma_k$, which interpolates between the classical, microscopic action $I$ for $k\to \infty$, that is, when all quantum fluctuations are suppressed, and the full quantum action $\Gamma$ for $k \to 0$. The equation governing the flow of the effective average action under changes in the scale $k$, in Euclidean spaces, is the \textit{Wetterich equation} \cite{Wetterich1992, Morris1993}. 

The FRG has been successfully applied to many different physical situations, although mainly in Euclidean spaces: from condensed matter systems to high-energy physics, most notably QCD, (see for example \cite{Dupuis2021} and references therein) and quantum gravity \cite{Percaccibook2017, Reuter1996, ReuterSaueressig2012, ReuterSaueressig2019, Saueressig2023}, where it represents the principal tool of investigation of the asymptotic safety scenario. In fact, thanks to the structure of the Wetterich equation, the FRG admits non-perturbative approximation schemes that can go beyond usual perturbation theory, allowing the study of strongly coupled systems and perturbatively non-renormalisable theories such as quantum Einstein gravity.

The approximation schemes in the FRG usually start from an ansatz for the effective average action, based on different expansions of $\Gamma_k$ into polynomials of the fields and their derivatives; the most used is the Derivative Expansion (DE), based on the number of spatial derivatives of the fields. The lowest order approximation in DE is the Local Potential Approximation (LPA), in which $\Gamma_k$ contains only an effective potential term $U_k(\phi)$, containing no derivatives of the fields, and a kinetic term corresponding to the classical one.

According to the RG philosophy, every possible interaction term is admitted in principle in the effective average action. The FRG flow reflects this behaviour in its mathematical structure, which can be written as
\[
\partial_k \Gamma_k = \frac{i}{2}\langle \partial_k q_k , (\Gamma_k^{(2)} -q_k)^{-1} \rangle \ ,
\]
where $Q_k = - \frac{1}{2} \int q_k \chi^2$ is the regulator term acting as an artificial mass for the field $\chi$, and $q_k$ is its integral kernel. The pairing is the standard pairing of bi-distributions over the spacetime $\mathcal M$ namely on $\mathcal{M} \times \mathcal M$. In Euclidean spacetimes, where Euclidean invariance selects a natural notion of vacuum, (since the Euler-Lagrange equations for the action $I$ are elliptic, rather than hyperbolic), the inverse $(\Gamma_k^{(2)} -q_k)^{-1}$ is unique. However, due to the appearance of the second derivative of $\Gamma_k$ on the r.h.s, we observe that, independently from the initial datum for $\Gamma_k$, the flow will always produce additional interaction terms in the effective average action. Thinking of the inverse in a perturbative expansion as a Neumann series, we observe that any power of some field polynomial can also appear along the flow. Only the artificial truncation of $\Gamma_k$ in a polynomial expression of finite order prevents the generation of infinite terms.

A standard approach in the literature is to simply truncate the expansion of $\Gamma_k$ in a finite number of terms; the Wetterich equation then reduces 
to a system of coupled, partial differential equations for the coefficients of the field polynomials, obtaining the $\beta-$\textit{functions} for the theory. However, the obtained solutions are not solutions of the full equation, because of the truncation. Furthermore, little control on the quality of the approximation scheme, compared to the full theory space, is possible. 

Mathematically, the problem of the generation of every possible term along the flow is connected with the problem of \textit{loss of derivatives}: intuitively, since the r.h.s of the Wetterich depends on the inverse of the \textit{second derivative} of the effective average action, a Green operator (a fundamental solution) for the Wetterich equation will also depend on the second derivative $\Gamma^{(2)}_k$. It follows that, if the Wetterich equation is an operator acting on some space of $C^n$ functions of the fields, its solutions will generally be only $C^{n-2}-$regular, losing two derivatives. Due to the loss of derivatives, standard iterative procedures to produce solutions in suitable Banach spaces fail to converge.

Recently, together with Drago and Rejzner we developed a new approach to the FRG, based on the methods of perturbative Algebraic Quantum Field Theory (pAQFT) \cite{AAQFT15, Brunetti2009, Fewster2019, Rejzner2016}, 
which was later extended to the case of gauge theories by one of the authors and K. Rejzner 
\cite{DDPR2022, DR2023}. The approach is fully Lorentzian, and allows for a generalization of the Wetterich equation to generic states and curved backgrounds, where a natural notion of vacuum is usually not at disposal. These Lorentzian RG flow equations exhibit  a state dependent flow, and are based on a Hadamard regularisation of the UV divergences, instead of a regularisation based on a momentum-dependent regulator $Q_k$. The Hadamard regularisation is made possible by selecting the class of Hadamard states as background states for the free theory. Both these features are shared with a similar approach developed recently for scalar fields in cosmological spacetimes \cite{Banerjee2022}. On the other hand, the regulator term is chosen as an artificial mass term, without momentum dependence: such a regulator term is particularly useful to preserve unitarity of the $S-$matrix and Lorentz invariance of the flow, and do not modify the structure of the propagator. This regulator term has been called \textit{Callan-Symanzik-type cut-off}, and the respective RG flow equation is also known as functional Callan-Symanzik equations \cite{Alexandre2001}. 

The RG flow equation then takes the expression of a functional differential equation for the effective potential $U_k$, depending on: a fixed background geometry $\mathcal M$; the smooth part of the two-point function $w$ of a reference state $\omega$ for the free theory, defined by a quadratic action $I_0$; the advanced and retarded propagators $\Delta_{A,R}^U$ for the wave operator $P_0  + \U''$, where $P_0 = I^{(2)}_0$ is the wave operator for the free theory; and some initial value $U_{k= \Lambda}$. We derive its form in \eqref{eq:RG for U}
and we recall here its form,
\begin{equation}\label{eq:RG for U intro}
\partial_k U_k = - \frac{1}{2}  \int_{\mathcal{M}}  \partial_k q_k(x) (1 -\Delta_R^U \U'') w (1 - \U'' \Delta_A^U )(x,x) \dd\mu_x\ .
\end{equation}
Equation \eqref{eq:RG for U intro} is non-perturbative in the coupling constant, as the Wetterich equation in Euclidean space, and so it allows for non-perturbative approximation schemes.

In this paper, we take a step further in clarifying the mathematical structure of the RG flow equations, and we prove that, with possibly a non-polynomial effective potential $U_k$ that contains no derivatives of the Dirac delta or of the field, the RG flow equations admit a local solution for suitably regular initial conditions.

In order to prove the existence of local solutions for the RG flow \eqref{eq:RG for U}, we need to choose an appropriate approximation. Inspired by Euclidean FRG approaches, as a first step towards more general results we choose to approximate $U_k$ with the Local Potential Approximation, as a local function of the field $\phi$ with no derivatives given \eqref{eq:LPA}, 
\[
U_k(\phi) = \int_{\mathcal{M}} u(\phi(x), k) f(x) \dd\mu_x  \ ,
\qquad 
U_k^{(2)}(\phi)(x,y) = \partial_\phi^2 u(\phi, k) f(x) \delta(x,y)  \ ,
\]
where $f$ is a compactly supported smooth function which is equal to $1$ on large regions of the studied spacetime. 
We further assume that the field $\phi$ is constant over the whole space, so that $\partial^2_\phi u$ is function of $k$ and $\phi$ only. However, notice that $U_k$ can be any smooth non-polynomial function of the field $\phi$.

Within this approximation, the r.h.s. of the RG flow equation can be written in terms of the map given in \eqref{eq:G} which we recall here
\[
G_k(\partial_\phi^2u) :=- \frac{1}{2 \lonenorm{f}}  \int_{\mathcal{M}}  \partial_k q_k(x)
\left\{(1-\partial_\phi^2 u \Delta_R^u f) \otimes (1-\partial_\phi^2 u \Delta_R^u f)(w)(x,x)
\right\} \dd\mu_x
\ .
\]
The RG flow equation reduces to an equation for $u(\phi,k)$. We are thus interested in studying the existence of solution for the following problem
\begin{equation}
\begin{cases} 
\partial_k u = G_k(\partial_\phi^2u) \ , \\
u(\phi, a) = \psi \ , \\
\left.u\right|_{\partial X \times [a,b]} = \beta \ 
\end{cases}
\end{equation}
where $\psi$ and $\beta$ are known functions which characterise respectively  the initial and boundary conditions of the problem. 

The LPA greatly simplifies the RG flow equation, which now is an equation for $u$, as a function of $k$ and $\phi$. However, the LPA does not simplify the problem of the loss of derivatives. 

The main result of this paper is the proof of local existence of solutions of this problem which is given below in Theorem \ref{theorem:existence of solutions}. The proof of local existence of local solutions of this problem is an application of the renown Nash-Moser Theorem.
 
Nash provided a beautiful theorem to prove local existence of solution of non-linear partial differential equations in spaces of smooth functions, which are particularly suited to deal with the problem of loss of derivatives \cite{Nash1956}. The theory was first developed in the context of isometric embeddings of Riemannian manifolds by Nash, and then further generalised by Moser \cite{Moser1966-I,Moser1966-II}. Hamilton \cite{Hamilton1982} provided a particularly natural setting for the theorem in the space of tame Fréchet spaces.

\subsection{Nash-Moser theorem in the Hamilton formulation} \label{sec:hamilton}

In the formulation of Hamilton, the Nash-Moser theorem is given for elements in a suitable tame Fréchet space. Mainly to fix notation, we recall here some basic definitions and the Theorem we shall use to get the main result presented in this paper.
\begin{definition}
A \textit{seminorm} on a vector space $F$ is a function $\norm{\cdot} : F \to \mathbb R$ such that, $\forall \ f, \ g \in F$ and $\forall c \in \mathbb R$: (i) $\norm{f} \geq 0$; 
(ii) $\norm{f+g} \leq \norm{f} + \norm{g}$; 
(iii) $  \norm{cf} = \abs{c} \norm{f} $.
A collection of seminorms $\{ \norm{\cdot}_n \}_{n \in \mathbb N}$ defines a unique topology such that a sequence $f_i \to f \Leftrightarrow \norm{f_i - f}_n \to 0 \ \forall n \in \mathbb N$.
A \textit{locally convex topological vector space} is a vector space with a topology arising from a collection of seminorms. The topology is called \textit{Hausdorff} if $f=0$ when $\norm{f}_n = 0 \ \forall n$. The topology is called \textit{metrizable} if the family $\{ \norm{\cdot}_n \}_n$ is countable, 
and the space $F$ is \textit{complete} if every Cauchy sequence converges.
A \textit{Fréchet space} is a complete Hausdorff metrizable locally convex topological vector space, and a graded Fréchet space has a collection of seminorms that are increasing in strength, so that $\norm{f}_n \leq \norm{f}_{n+1} \ \forall n$.
\end{definition}

\begin{definition}
A graded space $F$ is \textit{tame} if, given the space $\Sigma(B)$ of exponentially decreasing sequences in some Banach space $B$, it is possible to find two linear maps $L: F \to \Sigma(B)$, $M: \Sigma(B) \to F$, such that $ML : F \to F$ is the identity
\begin{equation}
F \to^L \Sigma(B) \to^M F \ .
\end{equation}

Consider two graded spaces $F$ and $G$, and a map $P: \mathcal U \subset F \to G$ from an open subset $\mathcal U$ of $F$ to $\mathcal G$. The map $P$ is \textit{tame} of degree $r$ and base $b$ if it is continuous and satisfies
\[
\norm{P(f)}_n \leq C ( 1 + \norm{f}_{n+r})
\]
for all $f$ in the neighbourhood of each $f_0 \in \mathcal U$, for all $n \geq b$, and with a constant $C$ that may depend on $n$.
\end{definition}

In this setting, we can make use of the following, classic theorem on the inverse function problem:

\begin{theorem}[Nash-Moser theorem in Hamilton's formulation]
Consider a smooth tame map $P: \mathcal U \subset F \to G$ between two tame Fréchet spaces $F$ and $G$. Suppose that 
\begin{enumerate}
    \item the linear map $DP(u) v = f$ obtained as the first functional derivative of $P$ has unique inverse $E(u)f = v \ \forall \ u \in \mathcal U$ and all $f\in G$, and
    \item the inverse map $E: \mathcal U \times G \to F$ is smooth tame.
\end{enumerate} 
Then $P$ is locally invertible and $P^{-1}$ is a smooth tame map.
\end{theorem}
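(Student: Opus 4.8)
This is the classical inverse function theorem of Nash and Moser, and I would follow Hamilton's proof, whose engine is a smoothing‑regularised Newton iteration. First I would extract from the tame structure of $F$ (and of $G$) a family of \emph{smoothing operators}: using the maps $L : F \to \Sigma(B)$ and $M : \Sigma(B) \to F$ with $ML = \mathrm{id}$, set $S_t = M \circ \rho_t \circ L$ for $t \geq 1$, where $\rho_t$ damps the components of a sequence with index exceeding $\log t$. These satisfy $\norm{S_t f}_{n+d} \leq C\,t^{d}\,\norm{f}_{n}$ and $\norm{(\mathrm{id}-S_t)f}_{n} \leq C\,t^{-d}\,\norm{f}_{n+d}$ for all $d \geq 0$, from which one deduces the logarithmic‑convexity interpolation inequalities $\norm{f}_{\theta m + (1-\theta) n} \leq C\,\norm{f}_m^{\theta}\,\norm{f}_n^{1-\theta}$. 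After translating the problem so that $P(u_0)=0$ and rescaling $P$ and the norms, it is enough to solve $P(u)=g$ for $g$ small in a sufficiently high seminorm, with $u$ sought near $u_0$ inside $\mathcal U$.

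Next I would run the modified Newton scheme $u_{j+1} = u_j + S_{t_j}\,E(u_j)\bigl(g - P(u_j)\bigr)$, with a rapidly increasing sequence, say $t_j = t_0^{\kappa^{j}}$ for a fixed $\kappa \in (1,2)$. Writing $DP(u_j)E(u_j)=\mathrm{id}$ and using Taylor's formula with integral remainder for the smooth tame map $P$, one gets $g - P(u_{j+1}) = DP(u_j)\bigl[(\mathrm{id}-S_{t_j})E(u_j)(g-P(u_j))\bigr] - R_j$, where $R_j$ is a second‑order remainder; the first term is small in low seminorms by the $(\mathrm{id}-S_{t_j})$‑estimate (at the cost of a high seminorm), and $R_j$ is \emph{quadratically} small in $\norm{g-P(u_j)}$ up to tame factors. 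The core of the proof is then a simultaneous induction: fixing indices $n_0 < n_1$, one proves that for $t_0$ large enough (equivalently $g$ small in $\norm{\cdot}_{n_1+r}$, where $r$ is the degree of $E$) the iterates obey $\norm{u_{j+1}-u_j}_{n_0} \leq \delta\,t_j^{-a}$ with $a>0$ and $\norm{u_{j+1}-u_j}_{n_1} \leq \delta\,t_j^{b}$, so that the low‑norm increments are summable while the high‑norm ones grow only polynomially; interpolation then gives convergence in every $\norm{\cdot}_n$ with $n_0 \leq n \leq n_1$, and since $n_1$ is arbitrary, $u_j \to u_\infty$ in $F$. The same low‑norm control keeps all $u_j$ in $\mathcal U$, and the decay of $\norm{g-P(u_j)}$ together with continuity of $P$ yields $P(u_\infty)=g$.

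Finally I would treat the inverse itself. Uniqueness near $u_0$ follows from $0 = P(u)-P(u') = \int_0^1 DP(u_\tau)(u-u')\,d\tau$ and invertibility of $DP$, which forces $u=u'$ once both are close in a strong enough norm; re‑running the iteration with $g$ as a parameter shows $P^{-1}$ is defined and tame on a neighbourhood. Differentiating $P\circ P^{-1}=\mathrm{id}$ gives $D(P^{-1})(g) = E\bigl(P^{-1}(g)\bigr)$, a composition of smooth tame maps, so $P^{-1}$ is $C^1$ and tame; then, inductively on the order of differentiation, and using that compositions, products and the inverse‑map construction preserve smooth tameness, one concludes that $P^{-1}$ is smooth tame. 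The main obstacle is the bookkeeping in the central induction — choosing $\kappa$, the exponents $a$, $b$ and the seminorm indices so that loss of $r$ derivatives in $E$, the gain from $S_{t_j}$, and the quadratic smallness of the residual balance; this is precisely where tameness and its interpolation inequalities are indispensable, and where the scheme would break down on a general Fréchet space.
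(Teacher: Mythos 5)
The paper itself offers no proof of this statement: it is imported verbatim as Hamilton's version of the Nash--Moser inverse function theorem and used as a black box, the only trace of its internal mechanism being the continuous-time scheme \eqref{eq:first-order-equation-recursion}, $\frac{d}{dt}u_t=-c\,D\mathcal{RG}^{-1}(S_tu_t)S_t(\mathcal{RG}(u_t))$, which is the smoothed Newton \emph{flow} that Hamilton actually integrates. Your sketch is a correct outline of the theorem, but it follows the other classical route: a discrete Moser-type iteration $u_{j+1}=u_j+S_{t_j}E(u_j)(g-P(u_j))$ with superlinearly growing smoothing parameters $t_j=t_0^{\kappa^j}$, a residual identity obtained from Taylor's formula, and the standard two-seminorm induction balancing the summable low-norm decay $t_j^{-a}$ against polynomial high-norm growth $t_j^{b}$, closed by interpolation. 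The two schemes buy essentially the same thing; the discrete version makes the quadratic convergence and the exponent bookkeeping more transparent, while the continuous flow (which the paper's later construction of $u_\infty$ presupposes) avoids choosing $\kappa$ and folds the induction into a differential inequality. Your sketch correctly identifies the genuinely delicate points --- that $E(u_j)$ loses $r$ derivatives and must therefore always be followed by a smoothing before re-entering $P$, that the hypothesis gives a \emph{two-sided} inverse so that $DP(u_j)E(u_j)=\mathrm{id}$ can be used in the residual identity, and that $D(P^{-1})=E\circ P^{-1}$ bootstraps smooth tameness of the inverse --- so as a reconstruction of the omitted classical proof it is sound, with the usual caveat that the quantitative choice of $a$, $b$, $\kappa$ and the base seminorm indices as functions of the tame degrees of $P$ and $E$ is the part that actually has to be written out.
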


\subsection{Strategy and summary of results}

In order to prove the theorem, using Hamilton's formulation of Nash-Moser theorem, the RG flow equation needs to satisfy a number of assumptions. First of all, it must be cast in the form of a suitable map acting on a tame Fréchet space. Requiring that $\phi$ and $k$ are limited in some compact interval, and that $u_k$ is a smooth function, is sufficient for $u_k$ to be an element of a tame Fréchet space. Secondly, the operator $\mathcal{RG}: u \in F_0 \to F$ defining the RG flow equation (see Def. \eqref{def: RG operator}) must be a smooth tame map between tame Fréchet spaces. In order to be tame, the RG operator must satisfy some estimates on its seminorms. Assuming that $u$ lies in some neighbourhood of $0$ (by requiring that a suitable seminorm of $u$, given below in Eq. \eqref{eq:seminorm}, is $\|u\|_4 < A$ for sufficiently small $A$), it is possible to prove these estimates using the Gr\"onwall lemma, since the normal-ordered interacting propagator $G_k(\partial^2_\phi u)$ satisfies a recursive integral inequality. 

Then, the linearisation of the RG operator must be an invertible smooth tame operator, and its inverse must be tame smooth. In the LPA, the linearisation $L = D \mathcal{RG}$ takes the form of a parabolic equation, analogous to a heat equation with a $k,\phi-$dependent heat conductivity $\sigma$. The inverse of linear parabolic equations is known \cite{Friedman} (see also \cite{Dappiaggi2020, Murro2022}, and the inverse of the linearised RG operator can be constructed from the heat kernel. Once the inverse of the linearised RG operator is known, it is possible to prove that it is tame smooth.

All these results are presented and proved in Propositions \ref{prop: RG tame smooth}, \ref{prop:linearised RG operator tame smooth}, \ref{prop: existence inverse E}, and \ref{prop: inverse is tame smooth}. These are used to prove our main result, Theorem \ref{theorem:existence of solutions}, on the existence of local solutions of the RG flow which we report here for completeness in a compact form.
\begin{theorem}
The RG operator admits a unique family of tame smooth local inverses, and unique local solutions of the RG flow equations exist.
\end{theorem}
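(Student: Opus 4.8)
The plan is to verify the two hypotheses of the Nash--Moser theorem (in Hamilton's formulation) for the map $\mathcal{RG}: u \in F_0 \subset F \to F$ encoding the RG flow equation together with its initial and boundary data, and then to read off local existence and uniqueness of solutions from local invertibility of $\mathcal{RG}$. First I would fix the functional-analytic setting: since $\phi$ and $k$ range over a compact box $X \times [a,b]$ and $u$ is required to be smooth, the space $F$ of such $u$ with the graded family of $C^n$-type seminorms \eqref{eq:seminorm} is a tame Fréchet space (standard, via restriction/extension to a torus and the Fourier characterization of tameness). I would then quote Proposition~\ref{prop: RG tame smooth} to conclude that $\mathcal{RG}$ is a smooth tame map on the neighbourhood $\|u\|_4 < A$; the key input there is that $G_k(\partial_\phi^2 u)$ obeys a recursive integral inequality, so a Gr\"onwall argument produces the tame estimates $\|\mathcal{RG}(u)\|_n \le C(1 + \|u\|_{n+r})$.

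Next I would address hypothesis (1): the linearisation $L = D\mathcal{RG}(u)$ must be invertible for every admissible $u$. By Proposition~\ref{prop:linearised RG operator tame smooth}, $L$ has the form of a linear parabolic operator in $(\phi,k)$ — effectively a heat-type equation $\partial_k v - \sigma(\phi,k)\,\partial_\phi^2 v = (\text{lower order})$ with a $\phi,k$-dependent conductivity $\sigma$ coming from the second $\phi$-derivative in $G_k$ — supplemented by the linearised initial datum at $k=a$ and the linearised boundary datum on $\partial X \times [a,b]$. Positivity (or at least the correct sign) of $\sigma$ on the neighbourhood $\|u\|_4 < A$ makes this a well-posed forward parabolic initial-boundary value problem, so by the classical theory of parabolic equations \cite{Friedman} (see also \cite{Dappiaggi2020, Murro2022}) the solution exists, is unique, and is given by integration against the heat kernel; this is exactly Proposition~\ref{prop: existence inverse E}, which furnishes the inverse $E(u)$. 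Then hypothesis (2), that $E: \mathcal U \times F \to F$ is smooth tame, is supplied by Proposition~\ref{prop: inverse is tame smooth}: the heat-kernel representation gives explicit bounds on the seminorms of $E(u)f$ in terms of those of $f$ and of $u$ (with a fixed loss of derivatives absorbed into the degree $r$), and smoothness in $u$ follows by differentiating the heat-kernel construction and re-running the same estimates.

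With both hypotheses in hand, the Nash--Moser theorem yields that $\mathcal{RG}$ is locally invertible with smooth tame local inverse. Unwinding the definition of $\mathcal{RG}$, a zero (or prescribed value) of $\mathcal{RG}$ near a given point is precisely a smooth solution $u(\phi,k)$ of $\partial_k u = G_k(\partial_\phi^2 u)$ with $u(\phi,a) = \psi$ and $u|_{\partial X \times [a,b]} = \beta$, for $(\psi,\beta)$ in a neighbourhood of the reference data; local invertibility gives existence, and injectivity of the local inverse gives uniqueness in that neighbourhood. I would then remark that "local" here is in the functional sense (data close to a reference solution, $k$ in the chosen compact interval), not that $k$ must be small.

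The main obstacle I expect is hypothesis (2), the tame smoothness of the inverse $E(u)$: one must control all seminorms of the parabolic solution operator uniformly as $u$ varies over the neighbourhood and track the loss of derivatives precisely, which requires careful heat-kernel estimates with constants depending on $n$ but in the tame form — the regularity loss from the $\partial_\phi^2$ in $G_k$ is exactly the "loss of derivatives" phenomenon that forces the use of Nash--Moser rather than a Banach fixed-point argument in the first place, so it reappears here and must be handled by the smoothing-operator machinery built into the theorem's proof, but feeding it the right tame bounds on $E$ is the delicate step. A secondary technical point is ensuring the sign of the conductivity $\sigma$ (hence well-posedness of the forward parabolic problem) throughout the neighbourhood $\|u\|_4 < A$, which is what pins down the admissible size $A$.
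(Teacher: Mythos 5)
Your proposal is correct and follows essentially the same route as the paper: the theorem is proved by verifying the hypotheses of Hamilton's Nash--Moser theorem via Propositions \ref{prop: RG tame smooth}, \ref{prop:linearised RG operator tame smooth}, \ref{prop: existence inverse E}, and \ref{prop: inverse is tame smooth}, and then reading off local invertibility of $\mathcal{RG}$ and hence local existence and uniqueness of solutions. The extra detail you supply (the parabolic structure of $L$, the positivity of $\sigma$, the delicacy of the tame estimates for $E$) matches the content of those propositions, so there is nothing to correct.
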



The material is organised as follows. We start with a review of the derivation of the RG flow equations, to set the notation and clarify the underlying framework, in section \ref{sec:FRG equations}. This section closely follows the presentation in \cite{DDPR2022}. In section \ref{sec:RG of the effective potential}, we clarify the structure of the RG flow equation, expanding on some points the derivation presented in \cite{DDPR2022}, and we show how to compute the interacting propagator $G_k$ from the free propagators and the underlying state for the free theory, to get the RG flow equations as closed differential equations for the effective potential $U_k$.

In the main part, starting from section \ref{sec:existence of local solutions}, we prove our main theorem. First of all, we define the LPA and identify the appropriate tame Fréchet spaces in which we want to solve the equation. We then proceed proving the main propositions \ref{prop: RG tame smooth}, \ref{prop:linearised RG operator tame smooth}, \ref{prop: existence inverse E}, and \ref{prop: inverse is tame smooth}. In theorem \ref{theorem:existence of solutions} we state our main result on the existence of local solutions of the RG flow equations, which follows immediately from the propositions.

\section{Functional RG flow equations}\label{sec:FRG equations}
\subsection{Generating functionals}

We recall here the main steps in the derivation of the RG flow equation in the form given in \eqref{eq:RG for U intro} presented in \cite{DDPR2022}.
We refer to the paper \cite{DDPR2022} for further details.

The methods used to represent the objects we are working with are those proper of pAQFT \cite{AAQFT15, Brunetti2009, Brunetti1999, Brunetti2001, HollandsWald2001a, HollandsWald2001b, Rejzner2016}. In this framework field observables are seen as functional over smooth field configurations. The quantum properties manifest themselves in the various products used to multiply those objects and in the involutions used to construct positive elements. In this way one obtains a $*$-algebra $\mathcal{A}$ of field observables. 
Expectation values of observables are obtained testing elements of $\mathcal{A}$ over a positive, normalised linear functional $\omega$.
We refer to \cite{Rejzner2016} for full details on the quantization procedure. 

We work with a globally hyperbolic spacetime $\mathcal{M}$, which is a smooth, oriented, and time oriented manifold, equipped with Lorentzian metric $\eta$ which makes $\mathcal M$ globally hyperbolic.
For simplicity we shall assume that $\mathcal{M}$ is stationary and ultra-static. This restriction is required in order to have good control on the regularity of the advanced and retarded propagators; however, the estimates that we prove are known to hold in more general spacetimes, such as de Sitter.

%

We start with the action of a quantum field theory propagating on $\mathcal{M}$. For simplicity we discuss here the scalar case. The lagrangian density of the theory is
\[
\mathcal{L}(\chi) = \mathcal{L}_0(\chi)+\mathcal{L}_I(\chi) = -\frac{1}{2} (g^{-1}(\partial \chi, \partial \chi) + \xi \chi^2 R + m^2 \chi^2) +  \mathcal{L}_I(\chi)
\]
where $\chi$ is a field configuration and
$\mathcal{L}_0(\chi)$ is the Lagrangian density of the free theory, quadratic in the fields. The free theory can be quantised, providing a $*-$algebra of observables for which states are known to exist. $\mathcal{L}_I$ is the interaction Lagrangian, the contribution to $\mathcal{L}$ which is more than quadratic in the fields.
We denote by $I$ and $I_0$ the bare action associated to the Lagrangian density $\mathcal{L}$ and $\mathcal{L}_0$ respectively, and by
\[
V(\chi) := \int_{\mathcal{M}} \lambda \mathcal{L}_I(\chi) f \dd\mu_x
\]  
the interacting action. $f$ is a cut-off function, which is equal to $1$ in the region of the spacetime $\mathcal{M}$ where we are analysing our theory, and it is inserted to make the integral over the manifold finite. Similar regulators needs to be considered also in $I$ and $I_0$.
%
Since a direct construction of the observables of the interacting theory is not available, we use perturbative methods to represent interacting fields over the free theory. Interacting fields are thus represented as formal power series in the parameter $\lambda$, which governs the non-linear coupling of the theory. The coefficients of the formal power series are elements of the free theory.
See \cite{Brunetti2009, Brunetti1999, HollandsWald2001a, HollandsWald2001b, HollandsWald2004} for further details.
States are thus linear functionals on the algebra of the free theory.

We now need to introduce the generating functionals for correlation functions, which are the starting point of most treatments of the functional Renormalization Group (fRG) \cite{Berges2000, Niedermaier2006,  Percaccibook2017, ReuterSaueressig2012, ReuterSaueressig2019, Saueressig2023}.

In this framework, the generating functional of correlation functions depends on the state of the free theory, and it is defined as
\[
Z(j) := \omega \left ( S(V)^{-1} \star S(V+J) \right ) \ ,
\]
where $J = \int j(x) \chi(x) \dd x$.
$V=I-I_0$ is the interaction action, and $S(V)$ is the time ordered exponential of $V$. The $\star-$product represents the quantum, non-commutative product in the algebra of the free theory constructed from $I_0$, and  $\omega$ is the state in which we are interested to evaluate correlation functions.

The functional derivatives of $Z(j)$ for vanishing sources gives the interacting, time-ordered correlation functions of the interacting fields,
\[
\eval{Z^{(n)}(j)}_{j=0} = i^n \omega \left ( S(V)^{-1} \star (S(V) \cdot_T \chi(x_1) \cdot_T ... \cdot_T \chi(x_n) \right )).
\]
We refer to \cite{DDPR2022} for a discussion of the relation to the standard approach present in the physics literature.
However, we observe that the standard approach may be recovered only for states $\omega$ for which the Gell-Mann-Low formula holds, namely when 
the star product above factorises in the product of expectation values. Equilibrium states at finite temperature or states on curved backgrounds do not have this property.  

%
The functional Renormalization Group approach works by deforming the underlying theory with an artificial mass scale $k$. In standard treatments, the procedure consists in the addition of a non-local regulator, quadratic in the fields, to the bare action $I$, acting as a scale-dependent mass term.
Although a non-local, momentum dependent regulator implements a Wilsonian renormalization flow, (i.e., a genuine coarse-graining procedure in which field modes with increasing frequency are progressively integrated out), at least in Euclidean settings, in the case of Lorentzian signature it appears less favourable, due to its non-local nature in position \cite{DDPR2022}. In particular, it can spoil the unitarity of the $S-$matrix and it can introduce artificial poles in the propagator. For this reason, in \cite{DDPR2022} we chose to use a local regulator term
\[
Q_k(\chi) = - \frac{1}{2} \int \dd\mu_x q_k(x) T \chi^2(x) \ .
\]
Notice that the time-ordering operator $T\chi^2$ naturally introduces a normal-ordering prescription, so that $T \chi^2$ is actually finite.

This regulator function will act as an artificial mass term in the correlation functions. Therefore, although it does not regularise UV divergences, it regularises IR divergences. In turn, it does not spoil the unitarity of the $S-$matrix and does not alter the structure of the propagator. Moreover, in \cite{DDPR2022} it was proven that, in the limit of infinite mass, $k\to \infty$, the Feynman propagator reduces to zero and quantum effects are completely suppressed. We then see that with the introduction of such a term, we can picture the flow of correlation functions under changes of the scale $k$, from large scales to the vanishing limit $k \to 0$, as a flow from the classical theory to the quantum one. This in turn justifies the terminology of \textit{renormalization group flow}. Finally, notice that local regulators of this type were already introduced in \cite{Alexandre2001}, where the RG flow equations with a local regulator have been called \textit{functional Callan-Symanzik equations}, and appear as a special case of the Wetterich equation \cite{Wetterich1992} in the case of local regulators instead of non-local ones. More recently, local regulators have been used in the Lorentzian setting in \cite{Fehre2021}, where they are used to study the flow of the graviton spectral functions, and in \cite{Litim2001, Litim2006} in the context of renormalization of thermal field theories.

We thus deform 
the generating functional $Z$, defining a $k-$dependent generating functional
\[
Z_k(j) = \omega \left ( S(V)^{-1} \star S(V+J + Q_k) \right ) \ .
\]

From the definition of the regularised generating functional $Z_k$, the steps to define the effective action are standard: we first define the (regularised) generating functional for the connected correlation functions as
\[
e^{i W_k(j)} := Z_k(j) \ .
\]
The first derivative of $W_k$ defines the \textit{classical field} $\phi$ as a function of $j$,
\[
\phi_j(x) := W^{(1)}_k(j)(x) = \frac{\delta W_k(j)}{\delta j(x)} = e^{-iW_k} \omega \left(S(V)^{-1} \star S(V+Q_k+J) \cdot_T \chi(x) \right) = \langle \chi \rangle \ ,
\]
while the second derivative is proportional to the connected, interacting Feynman propagator
\begin{equation}\label{eq:W(2)}
- i W^{(2)}_k(j) = \langle \chi(x) \cdot_T \chi(y) \rangle - \phi_j(x) \phi_j(y) \ .
\end{equation}
In the above relations, we introduced the angle brackets to denote the weighted expectation value of an interacting operator $F$, for non-vanishing sources and regulator:
\[
\langle F \rangle := e^{-iW_k} \omega \left(S(V)^{-1} \star S(V+Q_k+J) \cdot_T F \right)
\]

The relation between $\phi$ and $j$ can be inverted, giving
\begin{equation} \label{eq:connection between j and phi}
- j_\phi(x) = (P_0 \phi)(x)
    +Q^{(1)}_k(\phi)(x)
    +\langle T V^{(1)}(\chi) \rangle \ ,
\end{equation}
which can be solved perturbatively at all orders \cite{DDPR2022}. Thanks to this inversion, we can define the Legendre transform of $W_k$ as
\begin{equation}\label{eq:gamma-W}
\tilde \Gamma_k(\phi) := W_k(j_\phi) - J_\phi(\phi) \ ,
\end{equation}
satisfying
\[
\tilde \Gamma^{(1)}_k(\phi) = - j_\phi \ .
\]
The second derivative of $\tilde \Gamma_k$ and $W_k$ are related by the standard formula between Legendre transforms,
\begin{equation} \label{eq:Gamma and W second derivatives}
 ( \Gamma_k^{(2)} - q_k) (x,z) W^{(2)}_k(z,y) = - \delta(x,y) \ .
\end{equation}
Therefore, $W^{(2)}_k$ is a propagator for the \textit{quantum wave operator} $ \Gamma_k^{(2)} - q_k$.

Finally, the \textit{effective average action} is defined subtracting the classical term $Q_k(\phi)$ from the Legendre transform of $W_k$:
\[
\Gamma_k(\phi) := \tilde \Gamma_k(\phi) - Q_k(\phi) = W_k(j_\phi) - J_\phi(\phi) - Q_k(\phi) \ .
\]

\subsection{RG flow equations}
The RG flow equations govern the flow of the effective average action under scaling of the parameter $k$. We now remember the main steps in their derivation.

The first step is computing the $k-$derivative of $Z_k$, which is straightforward from its definition
\[
\partial_k Z_k(j) = i \omega \left ( S(V)^{-1} \star S(V+Q_k + J) \cdot_T \partial_k Q_k \right) \ .
\]
The $k$-derivative of $W_k$ follows immediately,
\begin{equation}\label{eq:equationWk-Q}
\partial_k W_k(j) = \langle \partial_k Q_k \rangle \ ,
\end{equation}
where we recall that
\[
\partial_kQ_k(\chi) = - \frac{1}{2} \int \dd x \partial_k q_k(x) T \chi^2(x) \ .
\]
Notice that, thanks to normal ordering introduced by the $T-$products of local observables, the flow equation is UV finite. 


The contribution $\langle T \chi^2 \rangle$ in $\langle \partial_k Q_k \rangle$ can be obtained as
\[
\langle T \chi^2(x) \rangle   = \lim_{y \to x} \left(\langle  \chi(x) \cdot_T \chi(y) \rangle - \tilde{H}_F(x,y)\right) \ ,
\]
where the counterterms $\tilde H_F(x,y)$, arising from the expectation value of a normal-ordered quantity, implements normal ordering in the interacting theory and make the expression finite.

%
%
%
%
%
%
Recalling \eqref{eq:W(2)}, we can rewrite \eqref{eq:equationWk-Q} as (the Lorentzian generalization of ) the \textit{Polchinski equation} \cite{Polchinski1983}. 
%
\begin{equation} \label{eq:polchinski}
\partial_k W_k= - \frac{1}{2} \lim_{y \to x} \int \dd x \partial_k q_k(x) \left [-i W^{(2)}_k(x,y) + \phi(x) \phi(y) -  \tilde H_F(x,y)  \right ] \ .
\end{equation}
Notice that $-i W^{(2)}_k$ is the propagator of the interacting theory. In the case of fundamental solutions of free hyperbolic equations, 
the counter-terms $\tilde{H}_F$ necessary to implement normal ordering are well known, and are given in terms of suitable Hadamard parametrix $H_F$
(see e.g. \cite{KayWald}). This normal-ordering procedure is known as Hadamard subtraction, or point-splitting regularisation. 
We refer to \cite{KayWald} for further details.

Recalling \eqref{eq:gamma-W}, the RG flow equation \eqref{eq:polchinski} can now be written as the flow equation for $\Gamma_k$ with a self-consistency relation
\begin{gather*}
\partial_k \Gamma_k = -\frac{1}{2} \int_x \partial_k q_k(x) :G_k(x,x):_{\tilde H_F} \\
(\Gamma^{(2)}_k - q_k) G_k  = i \delta \ ,
\end{gather*}
while the normal-ordering prescription is given by
\[
:G_k(x,y):_{\tilde H_F} = G_k(x,y) - \tilde H_F(x,y) 
\]
and $G_k(x,y) = -i W^{(2)}_k(x,y)$.

In the following section we will gain more insight in the r.h.s of the RG flow equation. 
In particular, we will use the fact that 
$G_k$ is a  fundamental solution of $(\Gamma^{(2)}_k - q_k)$ and $\tilde{H}_F$ is a fundamental solution of $(\Gamma^{(2)}_k - q_k)$ up to known smooth terms.
%
%
Hence, $:G_k:$ is a bi-solution of the equation of motion up to known smooth terms.
Moreover, in the regions of spacetime $J^-(\mathcal{O})$ where $V\to 0$ and $q_k\to 0$, 
$G_k$ reduces to $\omega_2+i\Delta_A$, the Feynman propagator of the free theory, where $\omega_2$ is the two-point function of the free theory. Similarly, $\tilde{H}_F$ reduces to $H_F$,  the Hadamard parametrix of the free theory. These observations allow to obtain an explicit form of $:G_k:(x,y)$ in terms of the effective average action and the smooth part of the state for the free theory, $w = \omega_2+i\Delta_A-H_F$, by means of the classical  M{\o}ller maps \cite{Drago2015}.

\section{Quantum equations of motion and RG flow for the effective potential}
\label{sec:RG of the effective potential}
Recalling that, by definition,
\[
\frac{\delta}{\delta \phi(x)} \left (\Gamma_k + Q_k \right ) = - j_\phi(x) \ ,
\]
equation \eqref{eq:connection between j and phi} can be recast in the \textit{quantum equation of motion} (QEOM)
\begin{equation} \label{eq:QEOM}
\frac{\delta}{\delta \phi(x)} \left (\Gamma_k + Q_k \right ) = P_0 \phi(x) + \langle V^{(1)}(x) + Q_k^{(1)}(x) \rangle \ .
\end{equation}
The above relation can also be re-expressed as a Dyson-Schwinger equation, since $P_0 \phi  = \langle \frac{\delta I_0}{\delta \chi} \rangle$, so that
\begin{equation} \label{eq:DSE}
\frac{\delta \Gamma_k}{\delta \phi(x)} = \langle \frac{\delta I}{\delta \chi(x)} \rangle \ .
\end{equation}
The QEOM suggests to decompose the effective average action into
\begin{equation} \label{eq:decomposition effective average action}
\Gamma_k^{(2)}(\phi) - q_k := P_0  + U^{(2)}_k(\phi) \ ,
\end{equation}
where $P_0 := I^{(2)}_0$ is the Green hyperbolic operator defined by the free action, while the {effective potential} $U_k(\phi)$, is defined by the relation
\begin{equation} \label{eq:relation U-V}
U_k^{(1)}(x) = \langle V^{(1)}(x) + Q_k^{(1)}(x) \rangle \ .
\end{equation}
The effective potential includes all the quantum corrections to the interaction $V$, and it can be seen as a non-perturbative definition for the sum of perturbative Feynman diagrams. In its perturbative expansion, the effective potential contains non-localities and possibly higher-derivative terms. 

We call the operator $P_0 + \U''$ the \textit{quantum wave operator}. In terms of the effective potential, the relation between the second derivative of the effective average action and $W^{(2)}_k$ reads
\begin{equation} \label{eq:quantum wave operator}
(P_0 + \U'') W^{(2)}_k = - 1  \ .
\end{equation}

In the following, we will assume that, despite quantum corrections, the quantum wave operator $P_0 + \U''$ remains \textit{Green hyperbolic}; that is, it admits unique advanced and retarded propagators such that
\[
(P_0 + \U'') \Delta^U_{A,R}(f) = f \   \text{and} \ \supp\Delta^U_{A}(f)  \subset J^-(\supp f) \ ,
\supp\Delta^U_{R}(f)  \subset J^+(\supp f).
\]

There is a standard procedure to intertwine the free and the quantum wave operators $P_0$ and $P_0 + \U''$, see \cite{Drago2015}. In fact, consider the operator $(1 - \Delta_{R}^U \U'')$ applied to any function $f$; we have
\[
(P_0 + \U'') (1 - \Delta_R^U \U'') f = P_0 f.
\]
It follows that the operators $(1- \Delta_{A,R}^U \U'')$ intertwine between the free and quantum wave operators. We call $(1- \U'' \Delta_{A}^U)$ and 
$(1- \Delta_{R}^U \U'')$ the \textit{advanced/retarded M\o ller operators}.

We can now rewrite the M\o ller operators in terms of the propagators for the free theory and the effective potential. We start from the defining property of $\Delta_{A,R}^U$, that they are fundamental solutions of the QEOM:
\[
(P_0 + \U'') \Delta_{A,R}^U = 1 \ .
\]
It follows that
\[
P_0 (1 + \Delta_{A,R} \U'') \Delta_{A,R}^U = 1 \ ,
\]
and so
the following recursive formula for $\Delta_{A,R}^U$ holds
\begin{equation}\label{eq:DeltaURA}
\Delta^U_{A,R} = \Delta_{A,R} (1-  \U'' \Delta^U_{A,R}).
\end{equation}
The last relation can be used to obtain a series representation of $\Delta^U_{A,R}$ in terms of powers of $\U''$.
%
%
%
%
Thanks to the M\o ller operators, we can write solutions and propagators of the quantum wave operator $P_0 + \U''$ in terms of the solutions and propagators of the free theory and of the effective potential $\U''$.
%
%

In what follows, we compute the regularised propagator $:G_k:_{\tilde H_F}$ for the interacting theory in terms of the propagators of the free theory and the effective potential $U_k$.  
We recall in particular that the regularised propagator $:G_k:_{\tilde H_F}$ is a solution for the quantum wave operator
$( \Gamma_k^{(2)} - q_k) :G_k:_{\tilde H_F} =0$ up to known smooth terms.
Furthermore, denoting by $\mathcal{O}\subset \mathcal{M}$ the support of $V $ and of $q_k$, which is a compact set because of the cut-off functions used in their construction, it holds by causality that 
\[
 :G_k(x,y): = \Delta_F(x,y) - H_F(x,y) = w(x,y) \ , \qquad \forall  x,y \in\mathcal{M}\setminus J^+(\mathcal{O}).  
\]
%
%
%
%
%
Since $:G_k:$ is a bi-solution of the QEOM, and it reduces to $w$ in 
the past of the supports of $V, q_k$, and $j$,
we arrive at
\begin{equation} \label{eq:G_k double series}
:G_k: = (1 - \Delta_R^U U^{(2)}_k ) w (1 - U^{(2)}_k \Delta_A^U  )\ . 
\end{equation}
Finally, thanks to the above expression, we conclude that the RG flow equations can be rewritten as differential equations for the effective potential $U$ as
\begin{equation} \label{eq:RG for U}
\partial_k U_k = - \frac{1}{2} \int_{\mathcal{M}}  \partial_k q_k(x) (1 -\Delta_R^U \U'') w (1 - \U'' \Delta_A^U ) \dd\mu_x \ .
\end{equation}
%
%
%
%
In the next section we discuss existence and uniqueness of local solutions of this equation.

\section{Existence of local solutions}\label{sec:existence of local solutions}
\subsection{Local Potential Approximation}

In this section, we would like to prove an existence theorem for local solutions of the RG flow equations. In order to do so, we restrict our attention to the \textit{Local Potential Approximation}: in this approximation, the effective potential is a local functional which does not contain derivatives of the fields. Furthermore, we consider the case in which the classical field $\phi$ is constant throughout spacetime. 

More precisely, the \textit{Local Potential Approximation} (LPA) assumes that the effective potential and its second functional derivative are 
\begin{equation} \label{eq:LPA}
U_k(\phi) = \int_{\mathcal{M}} u(\phi(x), k) f(x) \dd\mu_x  \ ,
\quad 
U_k^{(2)}(\phi)(x,y) = \partial_\phi^2 u(\phi(x), k) f(x) \delta(x,y)\ ,
\end{equation} 
where $f\in C^{\infty}_0(\mathcal{O})$ is an adiabatic cutoff ($f\geq 1$ and $f=1$ on the relevant part of the spacetime we are working with), which is inserted to keep the theory infrared finite, and  $\mathcal O \in \mathcal M$ is a compact region in the space-time containing the support of $f$.
Notice that what we call here LPA slightly differs from the usual approximation found in the physics literature. In fact, it is standard practice to expand the effective potential around an arbitrary background $\phi = \bar \phi + \varphi$, and then retaining in the effective potential only terms that are quadratic in the fluctuation field $\varphi$. This greatly simplifies the structure of the quantum wave operator, which in this way is approximated by an operator that is linear in the fluctuation field. On the contrary, even though we assume that the effective potential does not contain derivatives of the field, we are retaining its full non-linear dependence on the field $\phi$, without expanding on a fixed background.

We further recall that the background spacetime $\mathcal M$ is ultra-static. This assumption simplifies the explicit form of the retarded and advanced propagators for the free theory $\Delta_{A,R}$, and it allows for simple estimates of their norms. However, these estimates can be easily generalised to static spacetimes, and are known to holds in some special cases, such as de Sitter space.

Finally, in the simplest approximation, we choose the field $\phi$ to be a constant throughout the space-time, so that $u(\phi, k)$ and $ \partial_\phi^2 u(\phi, k)$ are constant in space. The arbitrary function $u(\phi,k)$ and its second field derivative $\partial^2_\phi u$ thus determine the effective potential, and so the effective average action.

In the limit where $V\to 0$ the effective potential reduces to $Q_k$ and $u$ reduces to $-q_k\phi^2/2$. We shall take this into account in fixing the initial conditions for $u$.

Thanks to this approximation, the second derivative of the effective potential $\U''$ appearing in the QEOM reduces to a perturbation of the free wave operator $P_0$ with a smooth external potential that has compact support, and in the limit where $f\to1$ on $\mathcal{M}$ the potential reduces to a mass perturbation.
It follows that many techniques of the generalised principle of perturbative agreement \cite{Drago2015} become readily available.

In particular, it is known that the interacting advanced and retarded propagators $\Delta_{A,R}^U$ are given by the free propagators $\Delta_{A,R}$ associated to $P_0$, with a mass modified by the external potential. In particular the recursive relations given in \eqref{eq:DeltaURA} permits to analyse analytically how $G_k$ depends on $u$. 

By the LPA, the RG flow equation \eqref{eq:RG for U} becomes a partial differential equation for $u(\phi,k)$.
Thus, we are interested in studying the existence and uniqueness of solutions of the problem associated with the RG flow equation \eqref{eq:RG for U}, supplemented with suitable boundary conditions and a set of initial values explicitly given in terms of the functions $\psi$ and $\beta$ as:
\begin{equation}\label{eq:RG for u}
\begin{cases} 
\partial_k u = G_k(\partial_\phi^2u) \ , \\
u(\phi, a) = \psi \ , \\
\left.u\right|_{\partial X \times [a,b]} = \beta \ .
\end{cases}
\end{equation}
where the function $G_k$ is defined as 
\begin{equation}\label{eq:G}
G_k(\partial_\phi^2u) :=- \frac{1}{2 \lonenorm{f}}  \int_{\mathcal{M}} \dd\mu_x \partial_k q_k(x)  
\left\{(1-\partial_\phi^2 u \Delta_R^u f) \otimes (1-\partial_\phi^2 u \Delta_R^u f)(w)(x,x)
\right\}
\ .
\end{equation}$w \in C^{\infty}({\mathcal M}^2) $ is a given symmetric smooth function (the smooth part of the chosen background state);
$f\in C^\infty_0(\mathcal{M})$ is the positive cutoff function used in $U$ and
 $\lonenorm{f}$ is the $L^1$ norm of $f$ computed with respect to the standard measure on $\mathcal{M}$;
$q_k$ is the integral kernel of the adiabatic regulator $Q_k$, which is assumed to be smooth and with compact support in $x$.
$\Delta_{R}^u:C^{\infty}_0(\mathcal{M})\to C^\infty(\mathcal{M})$ is the retarded fundamental solutions of $(P_0+ f \partial_\phi^2 u) g=0$ which coincides with $\Delta_R^U$ used in other part of the paper and it exists and it is unique because $P_0+ f \partial_\phi^2 u$ is a Green-hyperbolic operator \cite{Baer2015}.
Furthermore, in the integrand in \eqref{eq:G} $f$ is seen as a multiplicative operator which maps $C^{\infty}(\mathcal{M})\to C^{\infty}_0(\mathcal{M})$ and $1$ is the identity map in $C^{\infty}(\mathcal{M})$.
Notice that $\partial_\phi^2 u$ is constant with respect to $P_0$.   
Furthermore, thanks to the support properties of $f$
we have that $O:=(1-\partial_\phi^2 u \Delta_R^u f) \otimes (1-\partial_\phi^2 u \Delta_R^u f)$ is a linear operator on $C^{\infty}(\mathcal{M}\times \mathcal{M})$ to itself. 
Since $w$ is smooth on $\mathcal{M}$ the evaluation of $Ow$ on $(x,x)$ can be easily taken and the integral over $\mathcal{M}$ is finite because $q_k$ is of compact support.

To keep the analysis of this part as simple as possible, we shall assume
\begin{equation}\label{eq:qk}
q_k(x):=  (k_0+\epsilon k) f(x) 
\end{equation}
where $f$ is the same spacetime cutoff used in $U$ and where $k$ is assumed to have the dimension of a mass squared. With this choice, $\partial_k{q_k} = f(x)$ and it is independent on $k$. 
We furthermore observe that the contribution proportional to $k_0$ is constant in $k$ and it can always be reabsorbed in a redefinition of the mass of the free theory.
Many other choices, like the more usual $q_k(x) = k^2 f(x)$ can be brought to the same case using $k^2$ in the equation in place of $k$. 
%
%
%

The function $u$ in \eqref{eq:G} is a smooth function on compact spaces, and therefore
the tame Fr\'echet space we are working with is $F=C^{\infty}(X\times [a,b])$, 
where $X$ is a compact space in $\mathbb{R}$ which contains all possible values of $\phi$ and $k$ is in the positive interval $[a,b] \subset \mathbb{R}^+$, because the sign of $k$ is always assumed to be positive.
\color{black}
This space is Fr\'echet with seminorms
\begin{equation}\label{eq:seminorm}
\norm{u}_n = \sum_j^n \sum_{|\alpha| = j }\sup_{\phi, k} \abs{D^\alpha u(\phi, k) } \ ,
\end{equation}
where $\alpha\in \mathbb{N}\times \mathbb{N}$  is a multi-index, and thus the derivatives $D^\alpha$ are taken both in $\phi$ and $k$.
The space $F$ is tame because it is the space of smooth functions over a compact space \cite{Hamilton1982}. 

To have uniqueness of the solution of \eqref{eq:RG for u} we need to provide suitable boundary conditions and to prescribe initial values. We thus assume that 
\begin{equation}\label{eq:initial-boundary-condition}
u(\phi,a)=\psi, \qquad \left.u\right|_{\partial X \times [a,b]} = \beta
\end{equation}
where $\psi$ is a given smooth function on $X$ and $\beta$ is a given smooth function on $\partial X \times [a,b]$ compatible with $\psi$. 
To impose the initial values and the boundary conditions we introduce the tame Fr\'echet subspace of $F$  
\[
F_0 := \{ u\in F \,|\, u(\phi,a)=0,\, \left.u\right|_{\partial X \times [a,b]} = 0 \}.
\]
The solution $\tilde{u}$ of \eqref{eq:RG for u} we are looking for is then of the form 
\begin{equation}\label{eq:ub}
\tilde{u} = u_b+u ,\qquad u\in F_0
\end{equation}
where $u_b$ is a given element of $F$ selected in such a way that it satisfies the boundary conditions and respects the initial values given in \eqref{eq:initial-boundary-condition}. 

We also further assume that $\partial^2_\phi u$ and its second derivatives lie in a suitably small neighbourhood of $0$, that is, $\norm{u}_4 \leq A$ for some positive constant $A$.

To prove existence of local solutions of the RG flow equations, we make use of Nash-Moser theorem in Hamilton's formulation. To do so, we need to prove the validity of the strong assumptions of the theorem. We already remarked that $u$ lives in a suitable tame Fréchet space  $F_0$.
The RG flow equations, acting on $u$, determine a RG operator $\mathcal{RG}: \mathcal U \subset  F_0 \to F$, given below in Definition \ref{def:RG}. To use the Nash-Moser theorem, we further need to prove that i) the RG operator acting on $u$, is a tame smooth operator; ii) that its linearisation $D\mathcal{RG}(u): F_0 \to F$ is tame smooth as well; iii) that the linearisation of the RG operator admits a  unique inverse $D\mathcal{RG}^{-1}(u):F\to F_0 $  for every $u\in U$, and that the inverse is tame smooth. We will prove each of these assumptions in the following propositions. Since we can prove the assumptions of Nash-Moser theorem, it follows that the RG operator admits a local inverse.
The solution of the RG flow equations is then determined 
as the unique solution of the equation \cite{Hamilton1982}
\begin{equation}\label{eq:first-order-equation-recursion}
\frac{d}{dt} u_t = - c D\mathcal{RG}^{-1}(S_t u_t) S_t \left( \mathcal{RG}(u_t) \right)
\end{equation}
with a given $u_0=0$. In this equation $c$ is a positive fixed arbitrary constant and $S_t$ is a smoothing operator, see e.g.  \cite{Hamilton1982}.

If $\mathcal{RG}$ is a smooth tame map, if $D\mathcal{RG}(u)$ admits an unique inverse for every $u$ in a suitable subset of $F_0$, and if the inverse $D\mathcal{RG}^{-1}$ is also tame, a unique solution of equation \eqref{eq:first-order-equation-recursion} exists for all $t$ such that the limit of the sequence of approximated solutions converges to a solution of the RG flow equations, $\lim_{t\to\infty} u_t = u_\infty$ is such that $\mathcal{RG}(u_\infty)=0$, as it is proved in \cite{Hamilton1982}.


\color{black}

\subsection{The RG operator is tame smooth}



Following the strategy presented in the last section, we start with a formal definition.
\begin{definition}[RG operator]\label{def:RG}
Let $u_b\in F$ be such that it satisfies the initial values and the boundary conditions given in \eqref{eq:initial-boundary-condition}.
The \textit{RG operator} $\mathcal{RG} : \mathcal U \subset F_0 \to F$ is defined as
\begin{equation} \label{def: RG operator}
\mathcal{RG} : u \mapsto \mathcal{RG}(u) := 
\partial_k (u+u_b) - G_k(\partial_\phi^2 (u+u_b))
%
\end{equation}
where $G_k$ is given in \eqref{eq:G}. 
\end{definition}


As a first step in the proof of existence of local solutions, we want to prove that the RG operator is of the right class to apply the Nash-Moser theorem, i.e., it is tame smooth. In order to prove it,  
we start considering $G_k$ in \eqref{eq:G}.
We observe that since with $q_k$ given in \eqref{eq:qk}  $\partial_k q_k$ is constant in $(\phi,k)$, we have that 
${G}_k$ depends on $(\phi,k)$ only through  $\partial_\phi^2 \tilde{u}$, where we recall that $\tilde u = u + u_b$.
Consider now $G_k$, written as
\[
G_k(\partial_{\phi}^2 \tilde{u}) = -\frac{1}{2\lonenorm{f}} \int_{\mathcal{M}} \dd\mu_x   \partial_k q_k(x)    :{G}_k:(x,x)
\]
for $\tilde{u}\in F$.
We analyse how $:{G}_k:(x,x)$ depends on $\partial_\phi^2 \tilde{u}$. 
Notice that $:{G}_k:(x,y)$ can be given explicitly as
\[
:{G}_k:(x,y) := \int \dd\mu_{z_1} \dd\mu_{z_2}  (\delta - \Delta^U_{R} \U'')(x,z_1) w(z_1,z_2) (\delta - \U'' \Delta^U_{A} )(z_2, y) \ 
\]
where $\delta$ is the Dirac delta function (the integral kernel of the identity). 
Recalling that $(1 - \Delta^U_{R} \U'')\circ(1 + \Delta_R \U'') =1$, using the recursive relations given in \eqref{eq:DeltaURA},
%
%
we obtain a 
recursive formula for $:{G}_k:(x,y)$:
\begin{equation} \label{eq:recursive formula G}
:{G}_k:(x,y) = \tilde{w}(x,y) - \int \dd\mu_z \Delta_R \U''(x,z) :{G}_k:(z,y) \ ,
\end{equation}
where
\[
\tilde{w}(x,y) := \int \dd\mu_z w(x,z) (\delta-\U''\Delta_A^U )(z,y).
\]
This recursive relation can be used to get estimates of $:G_k(x,y):$, for $x,y$ contained in some compact region of the spacetime $\mathcal{M}$. 
First of all, we can prove the following Lemma, on estimates of the retarded propagator $\Delta_R^U g$ acting on some compactly supported smooth function $g$.

%
%
%
%
%
%

\begin{lemma} \label{eq:lemma-DeltaUf}
Let $\mathcal{M}$ be a 
ultra-static spacetime and let $t$ be a time function. Let $\tilde{u}\in F$, and consider 
\[
h= \Delta^U_R g 
\]
where $g$ is a compactly supported smooth function on $\mathcal{M}$. 
It then holds that $h$ is a past-compact smooth function with compact support on every Cauchy surface $\Sigma$. Moreover, recalling \eqref{eq:DeltaURA}, writing $h$ as
\[
h= (1-\Delta^U_R U^{(2)})  \varphi \ ,
\]
where $\varphi = \Delta_R g$, the following estimates hold: 
\begin{equation}\label{eq:ineq1}
\lnorm{h}^{t}_\infty  \leq  c \ltwotwonormt{h}{t}  \leq 
c \ltwotwonormt{\varphi}{t}   
e^{C |\partial_{\phi}^2 \tilde{u}|  }
\leq
c
e^{C \norm{\tilde{u}}_2  }
 \ltwotwonormt{\varphi}{t}   
\end{equation}
and
\begin{equation}\label{eq:ineq2}
\lnorm{h}^{t}_\infty  \leq  c \ltwotwonormt{h}{t}  \leq 
c
e^{C \norm{\tilde{u}}_2 }
 \int_{-\infty}^t \dd \tau  
(t-\tau) \ltwotwonormt{g}{\tau}  
\leq 
\tilde{C}e^{C \norm{\tilde{u}}_2 } \sup_{\tau\leq t} \ltwotwonormt{g}{\tau} \ .
\end{equation}
In the above inequalities, $C>0$ is a positive constant, which depends on the support of $f$ in $U$ but not on $\tilde{u}$; similarly, 
$\tilde{C}>0$ depends only on the support of $g$ and $c$ is positive and does not depend on $U$.
Furthermore,  $\ltwotwonormt{\cdot}{t}$ is the norm on the Sobolev space $W_{2,2}(\Sigma_t)$ and $\lnorm{\cdot}_\alpha^{t}$ is the norm on $L^{\alpha}(\Sigma_t)$ where $\Sigma_t=\{x \in \mathcal{M}| t(x)=t\}$ is the Cauchy surface at fixed time $t$.
\end{lemma}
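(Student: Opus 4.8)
The plan is to establish the two inequalities \eqref{eq:ineq1} and \eqref{eq:ineq2} by an energy-estimate argument for the retarded Cauchy problem associated with the operator $P_0 + f\partial_\phi^2\tilde u$ on the ultra-static background $\mathcal M$. First I would record the structural facts. Since $g$ has compact support, $\varphi = \Delta_R g$ solves $P_0\varphi = g$ with $\supp\varphi\subset J^+(\supp g)$, and $h=\Delta_R^U g$ solves $(P_0+f\partial_\phi^2\tilde u)h=g$ with $\supp h\subset J^+(\supp g)$; in particular, because $\mathcal M$ is ultra-static, $J^+(\supp g)$ meets each Cauchy surface $\Sigma_\tau$ in a compact set, so $h$ restricted to $\Sigma_\tau$ has compact support and is past-compact. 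The first inequality in each line, $\lnorm{h}_\infty^t \leq c\,\ltwotwonormt{h}{t}$, is just the Sobolev embedding $W_{2,2}(\Sigma_t)\hookrightarrow L^\infty(\Sigma_t)$ on the (two- or three-dimensional) Cauchy surface, uniform in $t$ since the spatial geometry is $t$-independent (ultra-static); the constant $c$ depends only on the geometry, not on $\tilde u$.

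Next I would carry out the core energy estimate. Write $P_0 = -\partial_t^2 + \Delta_\Sigma - m^2$ (up to sign conventions and the curvature term $\xi R$, which is $t$-independent and harmless), and set $E(\tau)$ to be the natural $W_{2,2}$-energy of $h$ on $\Sigma_\tau$ together with $\partial_t h$: schematically a sum of $L^2(\Sigma_\tau)$ norms of $h,\partial_t h$ and their spatial derivatives up to order two, plus mixed $t$-derivatives. Differentiating $E$ in $\tau$, integrating by parts on $\Sigma_\tau$, and using the equation $\partial_t^2 h = \Delta_\Sigma h - m^2 h - f\partial_\phi^2\tilde u\, h + g$ to trade second time derivatives, one obtains $\frac{d}{d\tau}E(\tau) \leq C'(1+|\partial_\phi^2\tilde u|)\,E(\tau) + C''\,\ltwotwonormt{g}{\tau}\,E(\tau)^{1/2}$, where the factor $|\partial_\phi^2\tilde u|$ enters through the zeroth-order term $f\partial_\phi^2\tilde u\,h$ and, after differentiating the equation up to twice in the spatial variables (which commute with the $t$-independent $f$ only up to the fixed smooth function $f$ and its derivatives), still only through $|\partial_\phi^2\tilde u|$ times geometry-dependent constants — this is where the constant $C$ "depends on the support of $f$ but not on $\tilde u$" comes from, and where $|\partial_\phi^2\tilde u|\leq \norm{\tilde u}_2$ is used. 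Applying the Grönwall lemma to this differential inequality with vanishing data in the far past (past-compactness gives $E\to 0$) yields $E(t)^{1/2}\leq C e^{C\norm{\tilde u}_2} \int_{-\infty}^t \ltwotwonormt{g}{\tau}\,d\tau$ in the inhomogeneous case, and, when one instead feeds in $\varphi$ as data for the homogeneous-in-$g$ reformulation $h=(1-\Delta_R^U U^{(2)})\varphi$ (so that $(P_0+f\partial_\phi^2\tilde u)h = P_0\varphi$, i.e. $h$ and $\varphi$ solve equations with the same right-hand side and $h-\varphi$ is driven by $-f\partial_\phi^2\tilde u\,h$), the Grönwall step gives $\ltwotwonormt{h}{t}\leq e^{C\norm{\tilde u}_2}\ltwotwonormt{\varphi}{t}$, which is \eqref{eq:ineq1}.

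To finish \eqref{eq:ineq2} I would bound $\ltwotwonormt{\varphi}{t}$ itself: $\varphi=\Delta_R g$ solves $P_0\varphi=g$, so the same energy estimate without the potential term gives $\ltwotwonormt{\varphi}{t}\leq C\int_{-\infty}^t (t-\tau)\,\ltwotwonormt{g}{\tau}\,d\tau$, the factor $(t-\tau)$ coming from integrating the first-order Duhamel/energy inequality twice (once to control $\partial_t\varphi$ from $g$, once more to control $\varphi$ from $\partial_t\varphi$), after which substitution into \eqref{eq:ineq1} yields the middle term of \eqref{eq:ineq2}; the last bound is then immediate, since $\int_{-\infty}^t (t-\tau)\,d\tau$ restricted to $\tau\in J^+(\supp g)^{\mathrm c}$-complement is finite — here one uses that $\supp g$ is compact, so the $\tau$-integration range where $\ltwotwonormt{g}{\tau}$ is nonzero is a bounded interval, giving a constant $\tilde C$ depending only on $\supp g$, times $\sup_{\tau\leq t}\ltwotwonormt{g}{\tau}$. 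The main obstacle I anticipate is the bookkeeping in the energy estimate for the \emph{second-order} Sobolev norm: one must commute up to two spatial derivatives through the equation, and although $f$ and the background geometry are $t$-independent so no time-derivative-of-coefficient terms appear, the commutators $[\Delta_\Sigma^{\,\cdot}, f\partial_\phi^2\tilde u]$ produce terms involving first derivatives of $f$ times first derivatives of $h$, which must be absorbed into $E(\tau)$ without generating any dependence on derivatives of $\tilde u$ (only on $\partial_\phi^2\tilde u$ itself and on $f$); getting the constants to land exactly as stated — $C=C(\supp f)$, $\tilde C=\tilde C(\supp g)$, $c$ independent of $U$ — is the delicate part, but it follows the standard pattern for symmetric-hyperbolic energy estimates on ultra-static spacetimes as in \cite{Baer2015}.
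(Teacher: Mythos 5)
Your proposal is correct in outline and arrives at the same two-step skeleton as the paper's proof --- a Gr\"onwall inequality for $\ltwotwonormt{h}{\tau}$ in the time variable followed by the Sobolev embedding $W_{2,2}(\Sigma_t)\hookrightarrow L^\infty(\Sigma_t)$ on the three-dimensional Cauchy surfaces --- but you produce the key inequality by a genuinely different device. You run a differential energy estimate, commuting spatial derivatives through the equation and absorbing the commutators with $f$, which is exactly the bookkeeping you flag as delicate. The paper instead exploits ultra-staticity head-on: it inserts the explicit retarded kernel $\sin(\omega(t-\tau))/\omega$ with $\omega=\sqrt{D+m^2}$ into the recursion $h=\varphi-\Delta_R\,(f\partial_\phi^2\tilde u)\,h$, giving
\[
h(t,\mathbf x)=\varphi(t,\mathbf x)-\partial_\phi^2\tilde u\int_{-\infty}^{t}\dd\tau\,\frac{\sin\big(\omega(t-\tau)\big)}{\omega}\,(fh)(\tau,\mathbf x)\ .
\]
Because $\omega$ commutes with $D$, the operator bound by $(t-\tau)$ holds directly on $W_{2,2}(\Sigma_\tau)$, the only cost of the multiplication by $f$ is a Leibniz factor in $\|f\|_\infty$, $\|\partial_i f\|_\infty$, $\|Df\|_\infty$, and the integral inequality $\lnorm{h}^t_{2,2}\leq\lnorm{\varphi}^t_{2,2}+C|\partial_\phi^2\tilde u|\int_a^t\lnorm{h}^\tau_{2,2}\dd\tau$ drops out with no commutator analysis; integrated Gr\"onwall then closes both \eqref{eq:ineq1} and, via the same formula applied to $\varphi=\Delta_R g$, the $(t-\tau)$ factor in \eqref{eq:ineq2}. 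Your route is more robust (it would survive on backgrounds without an explicit spectral representation), but to land on the constants as stated you must be more careful than your sketch: the homogeneous part of your differential inequality has to contribute \emph{no} growth (on an ultra-static spacetime the free higher-order energies are conserved, so the ``$1+$'' in your $C'(1+|\partial_\phi^2\tilde u|)E$ term should not appear, lest the exponential acquire a factor $e^{C'(t-t_0)}$ with $t_0$ tied to $\supp g$), and the Gr\"onwall accumulation must be confined to the time-slab of $\supp f$, where alone the perturbation acts, so that the exponent is $C|\partial_\phi^2\tilde u|$ with $C=C(\supp f)$ rather than $C(t)$. Both points are automatic in the paper's formula since $fh$ vanishes outside $\supp f$.
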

\begin{proof}
We recall that both $\Delta_R$ and  $\Delta^U_R$ map past-compact smooth functions to past-compact smooth functions, hence both $\varphi = \Delta_R g$ and $h=\Delta^U_R$ are smooth and past-compact.
We also recall that 
\[
\Delta^U_R 
=  \Delta _R  (1-\U'' \Delta^U_R ) 
=  (1-\Delta^U_R \U'')\Delta _R \ .
\]
Since $\U''= f \partial_\phi^2 \tilde{u} $, where $f$ is a smooth compactly supported function and $\partial_\phi^2 \tilde{u}$ is constant on $\mathcal{M}$, the following recursive relation holds
\[
h = \Delta_R g - \Delta_R \U''  h = \varphi - \Delta_R \U''  h.
\]
Now, let $D$ be the (positive) Laplace operator on $\Sigma_t$ constructed with the induced metric on $\Sigma_t$, and define $\omega =\sqrt{D+m^2}$ as the square root of the positive operator $D+m^2$. Hence
\[
h(t,\mathbf{x}) = \varphi (t,\mathbf{x}) - \partial_\phi^2 \tilde{u} \int_{-\infty}^{t} \dd \tau  \frac{ \sin{(\omega(t-\tau))} }{\omega} (f h)  (\tau,\mathbf{x} ) \ .
\]
We thus have   
\[
\lnorm{h}^t_{2} \leq  \lnorm{\varphi}^t_{2} +  |\partial_\phi^2 \tilde{u} |  \int_{-\infty}^{t} \dd \tau  (t-\tau)  \|f\|^\tau_{\infty}  \lnorm{h}^\tau_2  \ ,
\]
or, passing to the Sobolev norm $\lnorm{h}_{2,2}^t= \lnorm{h}_2^t+ \lnorm{D h}^t_2$, we have
\begin{align*}
\lnorm{h}^t_{2,2} & \leq  \lnorm{\varphi}^t_{2,2} +  |\partial_\phi^2 \tilde{u} |  \int_{-\infty}^{t} \dd \tau  (t-\tau)  \lnorm{f h}^\tau_{2,2}  
\\ 
 & \leq  \lnorm{\varphi}^t_{2,2} +  |\partial_\phi^2 \tilde{u} |  \int_{-\infty}^{t} \dd \tau  (t-\tau) (\|f\|^\tau_{\infty} + \sup_{i} 2 \| \partial_i f\|^\tau_{\infty}  + \|D f\|^\tau_{\infty})  \lnorm{h}^\tau_{2,2}
 \\
 & \leq  \lnorm{\varphi}^t_{2,2} + C |\partial_\phi^2 \tilde{u} |   \int _{a}^t   \dd \tau \lnorm{h}^\tau_{2,2}
\end{align*}
where $a=\inf_{x\in\supp f}\{ t(x)  \}$ and
for a suitable positive constant $C$ independent on $\partial_\phi^2 \tilde{u}$. $C$ is in fact finite because $f$ is smooth and with compact support on $\mathcal{M}$.

Applying the Gr\"onwall Lemma in integrated form to the previous inequality we obtain
\begin{align*}
\lnorm{h}^t_{2,2} \leq  \lnorm{\varphi}^t_{2,2} e^{C |\partial_\phi^2 \tilde{u}| }.
\end{align*}
To conclude the proof of the first inequality \ref{eq:ineq1}, we observe that $\Sigma_t$ is a three dimensional space, and so by standard arguments we have 
\[
\lnorm{h}^t_\infty 
\leq 
\lnorm{\hat{h}}^t_1  \leq \lnorm{(1+D){h}}_2  \lnorm{(1+D)^{-1}}_2 \leq 
c  \lnorm{{h}}_{2,2}.
\] 
where the $\lnorm{(1+D)^{-1}}_2$ is the $L$-$2$ norm of $(1+{D})^{-1}$.  
To prove \eqref{eq:ineq2} we use  \eqref{eq:ineq1} for $\varphi = \Delta_R g$.
Recalling that
\[
\varphi (t,\mathbf{x}) =  \partial_\phi^2 \tilde{u} \int_{-\infty}^{t} \dd \tau  \frac{ \sin{(\omega(t-\tau))} }{\omega} (g)  (\tau,\mathbf{x} ).
\]
and taking the Sobolev norms we have 
\[
\lnorm{\Delta_R g}_{2,2}^t \leq \int_{-\infty}^t \dd \tau  (t-\tau )   \lnorm{ g}_{2,2}^{\tau}.
\]
\end{proof}

Starting from the above analysis and the previous Lemma, we can prove that the RG operator is tame smooth.


\begin{proposition} \label{prop: RG tame smooth}
Assume that $\mathcal{U}\subset F_0$ is a small neighbourhood of $0$ so that for $u\in \mathcal{U}$,
$\norm{ u }_2 < A$ for some constant $A$. 
Then the RG operator is a smooth tame map.
\end{proposition}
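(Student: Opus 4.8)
The plan is to show that $\mathcal{RG}(u) = \partial_k(u+u_b) - G_k(\partial_\phi^2(u+u_b))$ satisfies a tame estimate $\norm{\mathcal{RG}(u)}_n \le C(1+\norm{u}_{n+r})$ on a neighbourhood of $0$, and that all its Gateaux derivatives do the same. The term $\partial_k(u+u_b)$ is manifestly linear and tame of degree $1$ (differentiation loses one derivative, and $u_b$ contributes a fixed constant), so the whole content is controlling the nonlinear map $u \mapsto G_k(\partial_\phi^2 \tilde u)$, where $\tilde u = u+u_b$. The first step is therefore to reduce to estimating $:G_k:(x,x)$ and its derivatives with respect to $\phi$ and $k$, noting that by \eqref{eq:qk} the kernel $\partial_k q_k = f$ is $(\phi,k)$-independent, so $G_k$ depends on $(\phi,k)$ only through the single real parameter $a := \partial_\phi^2\tilde u(\phi,k)$.

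The second step is to estimate $:G_k:(x,x)$ itself. Here I would use the recursive formula \eqref{eq:recursive formula G}, $:G_k:(x,y) = \tilde w(x,y) - \int \dd\mu_z\, \Delta_R\U''(x,z)\,:G_k:(z,y)$, together with Lemma \ref{eq:lemma-DeltaUf}: for fixed $y$ in the compact region $\mathcal O$, the function $z\mapsto \Delta_R\U''(x,z)$ and the iterated Møller operators acting on the smooth $w$ produce past-compact smooth functions whose Sobolev norms on Cauchy slices obey the Grönwall-type bounds of the Lemma, giving $\linftynorm{:G_k:(x,\cdot)}^t \le c\, e^{C\norm{\tilde u}_2}(\text{data involving }w)$. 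Integrating against $\partial_k q_k = f$ over the compact $\mathcal O$ then yields $\linftynorm{G_k(\partial_\phi^2\tilde u)}_0 \le C e^{C\norm{\tilde u}_2}$, which under the standing assumption $\norm{u}_2 < A$ (hence $\norm{\tilde u}_2$ bounded) is bounded by a constant — this is the base case $n=0$ of the tame estimate, and in fact uses no loss of derivatives at order $0$.

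The third step is the higher seminorms. Writing $G_k = \Phi(a)$ for a smooth function $\Phi$ of the scalar $a = \partial_\phi^2\tilde u$ (smoothness of $\Phi$ in $a$ follows from differentiating the resolvent-type series for $\Delta^U_R$ and the uniform bounds of the Lemma, valid for $|a|$ small, which is guaranteed since $\norm{u}_2<A$ controls $|\partial_\phi^2 u|$), the chain rule / Faà di Bruno gives $D^\alpha G_k$ as a finite sum of products of $\Phi^{(j)}(a)$ with derivatives $D^{\beta}a = D^\beta \partial_\phi^2\tilde u$, the highest of which involves $\norm{u}_{|\alpha|+2}$. Each factor $\Phi^{(j)}(a)$ is bounded by a constant (again via the Lemma and $\norm{u}_2<A$), and the lower-order derivatives $D^\beta a$ are bounded by $\norm{u}_{|\alpha|+1}$; since seminorms are increasing, everything is absorbed into $C(1+\norm{u}_{n+2})$. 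This establishes that $G_k$, and hence $\mathcal{RG}$, is tame of degree $2$ and base $0$. Smoothness as a tame map then follows the same pattern applied to the Gateaux derivatives $D^p\mathcal{RG}(u)[v_1,\dots,v_p]$, which are again built from $\Phi^{(j)}(a)$ times products of second $\phi$-derivatives of $v_i$ and of $\tilde u$; their tame estimates are obtained verbatim.

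The main obstacle is the second step: obtaining a bound on $:G_k:(x,x)$ that is uniform in $x$ (and in the base point of the neighbourhood) with the exponential dependence $e^{C\norm{\tilde u}_2}$ rather than something that degrades with the order of iteration of the Møller series. This is exactly what the Grönwall argument in Lemma \ref{eq:lemma-DeltaUf} is designed to provide — the point being that each application of $\Delta_R\U''$ contributes a factor that, after integrating the $\sin(\omega(t-\tau))/\omega$ kernel against the compactly supported $f$, is controlled by $|a|\int(t-\tau)\,\dd\tau$ over the fixed compact time-interval of $\supp f$, so the series sums to an exponential. One must also check that differentiating in $\phi$ and $k$ does not spoil the ultra-static structure used in the Lemma: since $\phi,k$ enter only through the scalar $a$ and the $x$-dependence of $\U'' = f\,a$ is through the fixed $f$, the spatial operators $D$, $\omega$ on the Cauchy slices are untouched, and the differentiated quantities satisfy the same type of recursion, so the Lemma applies to them as well.
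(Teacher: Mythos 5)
Your architecture is the same as the paper's: reduce to the nonlinear part $G_k(\partial_\phi^2\tilde u)$, observe that it depends on $(\phi,k)$ only through the scalar $\partial_\phi^2\tilde u$ because $\partial_k q_k=f$, bound the coefficient functionals (your $\Phi^{(j)}(a)$, the paper's $A_j(\tilde u)$) by a constant on the neighbourhood via the Gr\"onwall estimates of Lemma \ref{eq:lemma-DeltaUf}, and then control $D^\alpha G_k$ by Fa\`a di Bruno. The paper likewise writes the $n$-th functional derivative of $G_k$ explicitly as $A_n(\tilde u)\prod_j\partial_\phi^2 v_j$ and proves $\|A_l(\tilde u)\|_0\leq C(1+\|\tilde u\|_2)$ exactly as you sketch, by counting one factor $e^{C\|\tilde u\|_2}$ per application of $\Delta_R^U f$ or of a M\o ller operator.

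There is, however, one step where your argument as written does not close. In the Fa\`a di Bruno expansion a term with $j\geq 2$ inner factors has the form $\Phi^{(j)}(a)\prod_{i=1}^{j}D^{\beta_i}a$ with $\sum_i|\beta_i|=|\alpha|=n$. Bounding each factor $D^{\beta_i}a$ by $\norm{u}_{n+1}$ (or even $\norm{u}_{n+2}$) and invoking monotonicity of the seminorms yields $\norm{u}_{n+1}^{\,j}$, which is \emph{not} dominated by $C(1+\norm{u}_{n+2})$: a tame estimate must be affine in the top seminorm, and the neighbourhood hypothesis only controls $\norm{u}_2$, not $\norm{u}_{n+1}$. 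The missing ingredient is the interpolation (logarithmic convexity) inequality for the graded seminorms, $\norm{f}_n\norm{g}_m\leq C(\norm{f}_{n+m}\norm{g}_0+\norm{f}_0\norm{g}_{n+m})$ (Corollary 2.2.2 in \cite{Hamilton1982}), which reduces the product $\prod_i\norm{\partial_\phi^2\tilde u}_{r_i}$ to $\norm{\partial_\phi^2\tilde u}_{r}\norm{\partial_\phi^2\tilde u}_0^{\,j-1}$, so that only one factor carries the high seminorm and the remaining $j-1$ factors are absorbed into the constant precisely because $\norm{u}_2<A$. This is the point at which the smallness hypothesis of the Proposition actually does its work in the higher-order estimates, and it is the step the paper carries out explicitly in the chain of inequalities for $\|G_k\|_n$; without it your claim that ``everything is absorbed into $C(1+\norm{u}_{n+2})$'' is unjustified. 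The rest of your proposal (the $n=0$ bound, the uniformity of the Gr\"onwall constants in $x$, and the verbatim extension to the Gateaux derivatives) matches the paper.
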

\begin{proof}
We start considering $\tilde{u}=u_b+u$ for $u\in F_0$ and for a given $u_b$ which satisfies \eqref{eq:initial-boundary-condition} so that $\tilde{u}\in F$ and it satisfies the prescribed initial values and boundary conditions. 
We recall that from \eqref{def: RG operator}
\[
\mathcal{RG}(u) 
=
\partial_k (u+u_b) - G_k(\partial_\phi^2 (u+u_b))
\]
where $G_k$ is given in \eqref{eq:G}.  To prove that $\mathcal{RG}$ is tame smooth we just need to prove that $G_k$ is tame smooth for $\tilde{u}\in u_b+\mathcal{U}$. We have actually the following lemma

\begin{lemma}\label{lem:G-smooth}
The functional $G_k$ is a smooth function of $\partial_\phi^2\tilde{u}$. Furthermore, it is tame smooth for  $\tilde{u}\in u_b+\mathcal{U}$.
\end{lemma}
\begin{proof}
We observe that  $\partial_k q_k$ is constant on $X\times[a,b]$; hence, recalling the definition of $G_k$ given in \eqref{eq:G}, we have that $G_k(\partial_\phi^2\tilde{u})$ as a function on $X\times[a,b]$ depends on $(\phi,k)$ only through $\partial_\phi^2\tilde{u}$, that is,  $G_k(\partial_\phi^2 \tilde{u})(\phi,k)=G_k(\partial_\phi^2 \tilde{u}(\phi,k))$.
We also observe that $G_k(\partial_\phi^2\tilde{u})$ depends smoothly on $\tilde{u}\in F$.
%
%
Actually, the $n-$th order functional derivative of $\tilde{G}_k(\tilde{u})=G_k(\partial_\phi^2 \tilde{u})$ with respect to $\tilde{u}$ can be explicitly computed and it is well defined for every $n$; in fact, it is given by  
\begin{equation}\label{def:Al}
\begin{gathered}
\tilde{G}_k^{(n)}(v_1,\dots ,v_n) = 
\frac{(-1)^{n+1}}{\lonenorm{f}} n!
\sum_{l=0}^n  
\int_{\mathcal{M}}\dd\mu_x \partial_kq_k(x)
\cdot \\
\left\{
(\Delta^U_R f)^l \otimes (\Delta^U_R f)^{n-l} \circ 
(1-\partial_\phi^2 \tilde{u} \Delta_R^U f) \otimes (1-\partial_\phi^2 \tilde{u} \Delta_R^U f)(w)(x,x)
\right\}
\prod_{j=1}^n\partial_\phi^2 v_j\\
=: A_n(\tilde{u})\prod_{j=1}^n\partial_\phi^2 v_j \ .
\end{gathered}
\end{equation}
In the last formula, $f$ in $\Delta^U_R f$ is a multiplicative operator, and $A_n(\tilde{u})$ are suitable functionals of $\tilde{u}$.
Notice that both the cutoff functions $f$ and $q_k$ have compact support. $w$ is a smooth function on 
$\mathcal{M}^2$. Hence for every $\tilde{u}\in F$ the integral which defines $A_n$ can always be taken and it gives a finite bounded result.
We thus have that $G_k$ is a smooth function of $\partial_\phi^2 \tilde{u}$.

To prove that $G_k(\partial_\phi^2\tilde{u})$ is also tame, we proceed as follows.
We recall that $\|u\|_2< \| u \|_4 < A$, and that $G_k$ depends on $\phi$ and $k$ only through $\partial_\phi^2\tilde u$, because $\partial_k q_k = f$. 
By direct inspection, we have that 
\begin{equation}\label{eq:Gkstima}
\|G_k\|_n <  \|A_0(\tilde{u})\|_0
+
\sum_{p=1}^n
\sum_{l=1}^p  \|A_l(\tilde{u})\|_0 \| (\partial_\phi^2 \tilde{u})^l \|_{p-l} \ .
\end{equation}
To estimate $\| (\partial_\phi^2 \tilde{u})^l \|_{p-l}$, we use Leibniz rule together with an interpolating argument (See Corollary 2.2.2 in \cite{Hamilton1982}), stating that, for every $f,g\in F$,
\[
\| f \|_n\| g\|_m\leq C (\| f\|_{n+m}\| g\|_0+\| f\|_0 \| g\|_{n+m}) \ .
\]
Hence, by Leibniz rule, we have that  
\[
\|\partial_\phi^2\tilde{u}^l\|_{r}\leq C \sum_{R=(r_1,\dots, r_l),  |R|=r} \prod_{i=1}^l  \|\partial_\phi^2\tilde{u}\|_{r_i} \leq C' \|\partial_\phi^2\tilde{u}\|_r\|\partial_\phi^2\tilde{u}\|_0^{l-1} \ .
\]
Using this in \eqref{eq:Gkstima} we get
\begin{align*}
\|G_k\|_n \
&<  
C\left(
\|A_0(\tilde{u})\|_0
+
\sum_{p=1}^n
\sum_{l=1}^p  \|A_l(\tilde{u})\|_0 \| (\partial_\phi^2 \tilde{u})\|_0^{l-1} \|\tilde{u} \|_{p+2}
\right)\\
&<  C\left(
\|A_0(\tilde{u})\|_0
+
\sum_{p=1}^n
\sum_{l=1}^p  \|A_l(\tilde{u})\|_0   \|\tilde{u}\|_2^{l-1} \|\tilde{u} \|_{p+2}
\right)\\
&<  C\left(
1+ \|\tilde{u}\|_{n+2}\right) \ ,
\end{align*}
where in the last step we used the fact that $\|A_l(\tilde{u})\|_0\leq C(1+\|\tilde{u}\|_2)$. This last inequality is proved in the following Lemma \ref{lem:Al}.
\end{proof}

\begin{lemma}\label{lem:Al}
Consider the functionals $A_l(\tilde{u})$ for $\tilde{u}\in F$ given in \eqref{def:Al}. If $\|\tilde{u}\|_2< A$, it holds that  
\[
\|A_l(\tilde{u})\|_0\leq C(1+\|\tilde{u}\|_2).
\]
\end{lemma}
\begin{proof}
To prove this result we observe that both $\partial_k q_k$ and $f$ are smooth compactly supported functions on $\mathcal{M}$. 
The integral present in \eqref{def:Al} is thus taken on a compact region, even if 
$w$ is a smooth function supported in general everywhere on $\mathcal{M}^2$. 
Now, we need to estimate the action of each $\Delta^U_R f $  and of  
$(1-\partial_\phi^2 \tilde{u} \Delta_R^U f)$ by means of Lemma \ref{eq:lemma-DeltaUf}. 

Actually, Lemma \ref{eq:lemma-DeltaUf} implies that if $g$ is a smooth past-compact function, the following estimates hold:
\[
\lnorm{(\Delta^U_R f)^{n} g}^t_{2,2}
\leq 
\sup_{\tau<t}
\lnorm{(\Delta^U_R f)^{n-1} g}^\tau_{2,2} \tilde{C}e^{C \|\tilde{u}\|_2}
\leq
\sup_{\tau<t}
\lnorm{g}^\tau_{2,2} \tilde{C}^n e^{n C \|\tilde{u}\|_2} \ ,
\]
where the constant $\tilde{C}$ depends on $f$.
Similarly, 
\[
\lnorm{(1-\partial_\phi^2 \tilde{u} \Delta_R^U f) g}^t_{2,2}\leq 
ce^{C \|\tilde{u}\|_2} \ .
\]
We now use these estimates in 
\[
a_{l_1,l_2}(x,y):= \left\{
(\Delta^U_R f)^{l_1} \otimes (\Delta^U_R f)^{l_2} \circ 
(1-\partial_\phi^2 \tilde{u} \Delta_R^U f) \otimes (1-\partial_\phi^2 \tilde{u} \Delta_R^U f)(\chi w\chi )(x,y)
\right\} \ ,
\]
for $l_1+l_2 = n$, and where $\theta$ is a smooth compactly supported function which is equal to $1$ in a region which contains the support of $q_k$ and $f$. 
Thanks to this choice, we can replace $w$ in $G_k$ with $\theta w \theta$, getting
\[
\sup_{x\in \supp f} | a_{l_1,l_2}(x,x)| \leq 
\sup_{x,y\in \supp f} | a_{l_1,l_2}(x,y)| \leq      \sup_{t_x,t_y \in \supp f} \lnorm{\theta w \theta}^{(t_x,t_y)}_{4,2}
c^2 \tilde{C}^n e^{(n+2) C \|\tilde{u}\|_2}
\]
where $\lnorm{\cdot}^{(t_x,t_y)}_{4,2}$ is the Sobolev norm for functions defined on  $\Sigma_{t_x}\times \Sigma_{t_y}$. 
Using this estimate sufficiently many times in $A_l$, and recalling that $ e^{ C \|\tilde{u}\|_2} \leq C_1 (1+\|\tilde{u} \|_2)$ for a sufficiently large $C_1$ because $\|\tilde{u}\|_2 < A$, we have the thesis. 
\end{proof}

With this results, we can thus conclude the proof recalling that the linear combinations of smooth tame functionals is tame smooth.
\end{proof}
\color{black}

\color{black}


\subsection{The linearisation of the RG operator is tame smooth}
%

The first derivative of the RG operator  
%
%
%
defines the linearised RG operator 
$L(u) v = D\mathcal{RG}(u)v$ and by direct inspection it
is given by the linear operator
\[
D\mathcal{RG}(u)v 
=  \partial_k v -  \sigma \partial^2_\phi v \ ,
\]
where 
\begin{equation}\label{eq:sigma}
\begin{aligned}
\sigma(u) := \frac{1}{\lonenorm{f}}
 \int_{\mathcal{M}^2}  & \dd\mu_x\dd\mu_y \,\partial_k q_k(x)  \Delta_R^U(x,y)f(y)
 \\
& \left\{(1-\partial_\phi^2 u \Delta_R^u f) \otimes (1-\partial_\phi^2 u \Delta_R^u f)(w)(y,x)
\right\}.
\end{aligned}
\end{equation}
The function $\sigma$ as a function on $X\times [a,b]$ depends on $\phi$ through $\partial_\phi^2 u$ and on $k$ through $\partial_k q_k$ and $\partial_\phi^2 u$. With the choice of $q_k$ given in \eqref{eq:qk} $\partial_k q_k$ is constant in $k$, and the only way in which $\sigma$ depends on $(\phi,k)$ is through $u$.

\begin{definition}
Let $u_b\in F$ be a function which satisfies the initial values and boundary conditions given in \eqref{eq:initial-boundary-condition}, and let $\mathcal{U}$ be a neighbourhood of $0$ in $F_0$.
The linearised RG operator is defined as the map
\[
L:(u_b+\mathcal{U})\times F_0\to F
\]
\[
L(u)f  := \partial_k g - \sigma(u)\partial_\phi^2 g.     
\]
where $\sigma$ is the map defined in \eqref{eq:sigma}.
\end{definition}

The following Proposition specifies some of the properties of $\sigma$ that will be useful in the analysis of $L(u)$.

\begin{proposition}
The function $\sigma(u)$ is tame smooth.
\end{proposition}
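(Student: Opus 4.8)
The plan is to run the same argument used for Proposition~\ref{prop: RG tame smooth}: indeed, comparing the expression $D\mathcal{RG}(u)v = \partial_k v - \sigma(u)\,\partial^2_\phi v$ with the first functional derivative of $G_k$ computed in \eqref{def:Al}, one sees that $\sigma(u)$ coincides, up to a numerical factor, with the functional $A_1(u+u_b)$ of \eqref{def:Al}. It is built out of exactly the ingredients already controlled in Lemma~\ref{eq:lemma-DeltaUf} and Lemma~\ref{lem:Al}, namely the retarded M\o ller operators $(1-\partial_\phi^2\tilde u\,\Delta_R^U f)$, one extra retarded propagator $\Delta_R^U$, the smooth symmetric kernel $w$, and the two compactly supported cutoffs $\partial_k q_k = f$ and $f$, with $\tilde u = u+u_b$.

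First I would record, exactly as at the start of the proof of Lemma~\ref{lem:G-smooth}, that since $\partial_k q_k = f$ is constant in $(\phi,k)$ and the classical field is constant on $\mathcal M$, the function $\sigma(u)$ on $X\times[a,b]$ depends on the point $(\phi,k)$ only through the scalar $\partial_\phi^2\tilde u(\phi,k)$, so $\sigma(u)(\phi,k)=\sigma(\partial_\phi^2\tilde u(\phi,k))$ for a fixed smooth function $\sigma(\cdot)$ of the scalar $\partial_\phi^2\tilde u$. Differentiating this scalar function, and using the recursive relation \eqref{eq:DeltaURA} to differentiate each $\Delta_R^U$ (every derivative inserts one further operator $\Delta_R^U f$) together with the Leibniz rule on the explicit M\o ller factors, one obtains that every derivative $\sigma^{(m)}$ is a finite sum of functionals of the same structural type as the $A_n$ of \eqref{def:Al}: an integral over the compact set $\supp f$ --- compact in the outer variable because $\partial_k q_k = f$, and in the inner variable because the multiplicative $f$ confines it to $\supp f$ --- of finitely many operators $\Delta_R^U f$ and M\o ller operators applied to $\theta w\theta$, $\theta$ a cutoff equal to $1$ on a neighbourhood of $\supp f\cup\supp q_k$. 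Since $w$ is smooth and the domain is compact, each such integral is finite for every $\tilde u\in F$; hence $\sigma$ is a smooth function of $\partial_\phi^2\tilde u$, and so a smooth map on $F$.

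For tameness I would bound $\norm{\sigma(u)}_n$ as in the derivation of \eqref{eq:Gkstima}: by the chain rule $\norm{\sigma(u)}_n$ is controlled by a finite sum of terms $\norm{\sigma^{(m)}(\tilde u)}_0\,\norm{(\partial_\phi^2\tilde u)^m}_{p-m}$ with $p\le n$. For the first factor, Lemma~\ref{eq:lemma-DeltaUf} applied repeatedly to estimate each $\Delta_R^U f$ and each $(1-\partial_\phi^2\tilde u\,\Delta_R^U f)$ on the Cauchy surfaces through $\supp f$ --- this is precisely the computation of Lemma~\ref{lem:Al} --- gives $\norm{\sigma^{(m)}(\tilde u)}_0\le C\,e^{C\norm{\tilde u}_2}\le C_1(1+\norm{\tilde u}_2)$, the last step using $\norm{u}_2<A$ and hence $\norm{\tilde u}_2\le\norm{u_b}_2+A$. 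For the second factor, the Leibniz rule together with the interpolation estimate of Corollary~2.2.2 of \cite{Hamilton1982} gives $\norm{(\partial_\phi^2\tilde u)^m}_{p-m}\le C\,\norm{\partial_\phi^2\tilde u}_{p-m}\,\norm{\partial_\phi^2\tilde u}_0^{m-1}\le C'\,\norm{\tilde u}_{p+2}\,\norm{\tilde u}_2^{m-1}$. Combining, and using once more $\norm{\tilde u}_2\le\norm{u_b}_2+A$, yields $\norm{\sigma(u)}_n\le C(1+\norm{\tilde u}_{n+2})$ on a neighbourhood of each point of $\mathcal U$, i.e.\ $\sigma$ is tame of degree $2$; the same estimate applied to the explicit expressions for the $\sigma^{(m)}$ shows that all derivatives are tame, so $\sigma$ is tame smooth.

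The only point needing some care --- and the one I would flag as the main (minor) obstacle --- is the extra factor $\Delta_R^U(x,y)f(y)$ relative to $G_k$: one has to check that it does not enlarge the effective integration region (it does not, because the multiplicative $f$ keeps the inner variable in $\supp f$ while $\partial_k q_k = f$ keeps the outer one there) and that the diagonal evaluation at $(x,x)$ of the resulting past-compact smooth functions is still controlled by the $W_{2,2}(\Sigma_t)$ estimates of Lemma~\ref{eq:lemma-DeltaUf}. With this in place the estimates go through verbatim as in the proof of Proposition~\ref{prop: RG tame smooth}.
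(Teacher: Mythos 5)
Your proof is correct and follows essentially the same route as the paper: the paper's own argument likewise rests on the observation that $\sigma$ is the first functional derivative of $G_k$ with respect to $\partial_\phi^2 u$ (the functional $A_1$ of \eqref{def:Al}), so that its smoothness and tameness are inherited from Lemma \ref{lem:G-smooth} and Lemma \ref{lem:Al}. The paper states this in a few lines by reference, where you re-run the estimates explicitly, but the content is identical.
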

\begin{proof}
The function $\sigma$ is linear in $q_k$ and $q_k$ is a smooth function of $k$: 
actually recalling  \eqref{eq:qk} $q_k= (\epsilon k +k_0) f(x)$, 
and so $\partial_k q_k$ is constant in $(\phi,k)$. Hence, 
$\sigma$ depends on $k$ and on $\phi$ only through $u$.
Furthermore, the $n-$th order functional derivative $\sigma$ with respect to $\partial_\phi^2 u$ is always well-defined because it equals the $n+1$ order functional derivative of $G_k$ with respect to $\partial_\phi^2 u$, and we already proved in Lemma \eqref{lem:G-smooth} that $G_k$ is a smooth function of $\partial_\phi^2u$.
Furthermore, $\sigma$ is a smooth function and it is tame with respect to $u$
because it is related to the functional derivative of $G_k$ which is tame smooth as proven in Lemma \ref{lem:G-smooth}.
Hence $\sigma$ is tame smooth.
\end{proof}
%
%
%
%
%
%
%
%
%
%
%
%

The next proposition shows that, by a suitable choice of smooth functions $w$ (or, equivalently, by suitable choices of states), the assumptions that: i) $\sigma$ is larger than some positive constant $c$, and ii) that $\norm{u}_2 \leq A$ is in some small neighbourhood of $0$, hold.
\begin{proposition}\label{prop:select-sigma}
If the boundary conditions given in \eqref{eq:initial-boundary-condition} are such that $\|\beta\|_2+\|\psi\|_2 < \epsilon$ for a sufficiently small $\epsilon$ and if $u_b$ in \eqref{eq:ub} is chosen to be such that $\|u_b\|_2\leq \epsilon$, then
for certain choices of the function $w\in C^{\infty}(\mathcal{M}^2)$, it exists a neighbourhood $\mathcal{U}\subset F_0$ such that for every $u\in \mathcal{U}$,  $\sigma(u_b+u)\geq c>0$ and $\|u\|_2< A=\epsilon$.
\end{proposition}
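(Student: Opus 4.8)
The plan is to control $\sigma(u_b+u)$ by isolating a leading term that is a genuinely free quantity, and showing that for a suitable choice of $w$ this leading term is bounded below by a positive constant, while the remaining terms are small perturbations. Concretely, I would first observe that, by the recursive relation \eqref{eq:DeltaURA} for $\Delta_R^U$ and the series expansion it generates, one can write
\[
\sigma(u) = \sigma_0 + \sigma_{\mathrm{rem}}(u),
\qquad
\sigma_0 := \frac{1}{\lonenorm{f}}\int_{\mathcal{M}^2}\dd\mu_x\dd\mu_y\,\partial_k q_k(x)\,\Delta_R(x,y)f(y)\,w(y,x),
\]
where $\sigma_0$ is the value of $\sigma$ at $\partial_\phi^2 u = 0$ (i.e. built only from the free retarded propagator $\Delta_R$ and the fixed smooth function $w$), and $\sigma_{\mathrm{rem}}(u)$ collects all terms containing at least one factor of $\partial_\phi^2\tilde u$, with $\tilde u = u_b+u$. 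The point is that $\sigma_0$ does not depend on $u$ at all: it is a fixed number (using that $\partial_\phi^2 u$, hence $\sigma$, is constant in space under the LPA and the constant-field assumption), and we are free to choose $w$ so that $\sigma_0 \geq 2c$ for some $c>0$. Since $\Delta_R$, $f$, and $\partial_k q_k = f$ are fixed, this is simply a statement that the class of admissible smooth symmetric $w$ (equivalently, the class of Hadamard states for the free theory) is rich enough to make this pairing strictly positive; one exhibits one such $w$ explicitly, e.g. by choosing $w$ appropriately localized and of definite sign against the causal kernel $\Delta_R(x,y)f(y)f(x)$.

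Next I would estimate $\sigma_{\mathrm{rem}}(u)$. Using Lemma \ref{eq:lemma-DeltaUf} exactly as in the proof of Lemma \ref{lem:Al} — each factor $\Delta_R^U f$ or $(1-\partial_\phi^2\tilde u\,\Delta_R^U f)$ acting on a past-compact smooth function is controlled in the $W_{2,2}$-norm by $\tilde C e^{C\|\tilde u\|_2}$, and the integral is over the compact support of $\partial_k q_k = f$ — one obtains a bound of the form
\[
|\sigma_{\mathrm{rem}}(u_b+u)| \leq C'\,|\partial_\phi^2(u_b+u)|\,\bigl(1 + |\partial_\phi^2(u_b+u)|\bigr)\,e^{C\|u_b+u\|_2}
\leq C''\,\|u_b+u\|_2
\]
whenever $\|u_b+u\|_2$ stays bounded (say $<A$), because the right-hand side then vanishes to first order in $\|u_b+u\|_2$. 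Now impose the hypotheses: $\|\beta\|_2 + \|\psi\|_2 < \epsilon$ and $\|u_b\|_2 \leq \epsilon$. Taking $A = \epsilon$ and letting $\mathcal{U} = \{u\in F_0 : \|u\|_2 < \epsilon\}$, for $u\in\mathcal U$ we have $\|u_b+u\|_2 \leq \|u_b\|_2 + \|u\|_2 < 2\epsilon$, so $|\sigma_{\mathrm{rem}}(u_b+u)| \leq 2C''\epsilon < c$ provided $\epsilon$ is chosen small enough (depending only on $c$ and $C''$). Hence
\[
\sigma(u_b+u) = \sigma_0 + \sigma_{\mathrm{rem}}(u_b+u) \geq 2c - c = c > 0,
\]
and simultaneously $\|u\|_2 < \epsilon = A$ by construction, which is the second claim.

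The main obstacle I anticipate is the positivity of the leading term $\sigma_0$: one must verify that there genuinely exists an admissible smooth $w$ (arising from a Hadamard state) for which the causal pairing $\int \partial_k q_k(x)\Delta_R(x,y)f(y)w(y,x)\,\dd\mu_x\dd\mu_y$ is strictly positive and of controllable size, rather than accidentally zero or sign-indefinite for all states. On an ultra-static spacetime this should be tractable because $\Delta_R$ has the explicit representation $\sin(\omega(t-\tau))/\omega$ used in Lemma \ref{eq:lemma-DeltaUf}, so the pairing can be computed (or at least sign-analyzed) semi-explicitly in terms of the spatial operator $\omega = \sqrt{D+m^2}$; one selects $w$ adapted to this representation. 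The remaining steps — the Grönwall-type estimate on $\sigma_{\mathrm{rem}}$ and the triangle-inequality bookkeeping to choose $\epsilon$ — are routine given Lemma \ref{eq:lemma-DeltaUf} and the arguments already carried out in Lemmas \ref{lem:G-smooth} and \ref{lem:Al}.
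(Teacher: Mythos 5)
Your proposal is correct and follows essentially the same route as the paper: isolate the free term $\sigma(0)$ (your $\sigma_0$), choose $w$ so that it is bounded below by a positive constant, and show the $u$-dependent remainder is $O(\|u_b+u\|_2)$ so that it cannot destroy positivity once $\epsilon$ is small. The only cosmetic difference is that the paper bounds the remainder via the fundamental theorem of calculus and the estimate $\|\sigma^{(1)}(\lambda(u_b+u))(u_b+u)\|_0\leq C''\|u_b+u\|_2$ from Lemma \ref{lem:Al}, whereas you expand $\Delta_R^U$ through the recursion \eqref{eq:DeltaURA} and invoke Lemma \ref{eq:lemma-DeltaUf} directly; both yield the same linear bound.
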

\begin{proof}
We recall that 
\[
\sigma(0) = \frac{1}{\lonenorm{f}}
 \int_{\mathcal{M}^2} \dd\mu_x\dd\mu_y \,\partial_k q_k(x)  f(y) \Delta_R(x,y)(w)(y,x).
\]
$\sigma(0)$ is linear in $w$ and it cannot be identically $0$ for every $w$, hence it is possible to choose a $w$ such that $\sigma(0)\geq (2\epsilon C + c)>0$, where $C > \sup_{\lambda\in[0,1]} \|\sigma^{(1)}(\lambda(u+u_b))\|_0 $. Moreover, 
$\sigma$ depends smoothly on $u$. We can choose $u_b$ so that  $\| u_b\|_2\leq  (\|\beta\|_2+\|\psi\|_2 ) < \epsilon $ and we can choose a sufficiently small $\mathcal{U} \subset F_0$ such that every $u\in\mathcal{U}$ is such that $\|u\|_2<\epsilon$. 
Hence, the smoothness of $\sigma(u)$ implies that  
\begin{equation}\label{eq:stima-sigma}
\begin{aligned}
\sigma(u)  & =
\sigma(0) + \int_0^1\dd \lambda \,  \frac{d}{d\lambda}\sigma(\lambda (u+u_b))
\\
& \geq 
\sigma(0) - \sup_{\lambda } \| \sigma^{(1)}(\lambda (u+u_b))(u+u_b)\|_0 \ .
\end{aligned}
\end{equation}
We notice that $\sigma^{(1)}$ is related to $G_k^{(2)}$ and it can be given in terms of the functions $A_n$ with $n=2$ defined in \eqref{def:Al}.
More explicitly, it takes the form  
\begin{equation}\label{def:sigma1}
\begin{gathered}
\sigma^{(1)}(\tilde{u})(v) = 
\frac{(-1)^{3}}{\lonenorm{f}} 2
\sum_{l=0}^2  
\int_{\mathcal{M}}\dd\mu_x \partial_kq_k(x)
\cdot \\
\left\{
(\Delta^U_R f)^l \otimes (\Delta^U_R f)^{2-l} \circ 
(1-\partial_\phi^2 \tilde{u} \Delta_R^U f) \otimes (1-\partial_\phi^2 \tilde{u} \Delta_R^U f)(w)(x,x)
\right\}
\partial_\phi^2 v\\
= A_2(\tilde{u})\partial_\phi^2 v.
\end{gathered}
\end{equation}
Thanks to the estimate given in Lemma \ref{lem:Al}, we have that 
\[
\begin{aligned}
\|\sigma^{(1)}(\lambda (u_b+u))(u_b+u)\|_0 
&\leq \|A_2\|_0 \|(u_b+u)\|_0 
\\
&\leq     
C'(1+\|u_b+u\|_2)   \|(u_b+u)\|_0 
\\
&\leq     
C'' \|u_b+u\|_2 
\end{aligned}
\]
for suitable constants $C'$ and $C''$ depending on $A$.
Using this estimate in \eqref{eq:stima-sigma}, 
and recalling the choices we made for $w$ in $\sigma(0)$, we obtain that for a suitable $c'$
\[
\sigma(u)  \geq  \sigma(0) - \sup_{\lambda } \| \sigma^{(1)}(\lambda (u+u_b))(u+u_b)\|_0 \geq  c' >0 \ ,
\]
thus concluding the proof.
\end{proof}

%
%
%
%
%

\color{blue}

\color{black}

%
%
%


\begin{remark}
Notice that thanks to Proposition \ref{prop:select-sigma},  $\sigma$ can be chosen to be positive.
In applications to physics, when $w$ is obtained as the smooth part of the two-point function of a quantum state, it is not obvious that the choices necessary to have $\sigma$ positive can be made. In spite of this problem we observed in \cite{DDPR2022}  that this is the case in many physically sensible states, also thanks to the freedom in the split of the smooth part from the singular one present in any Hadamard two-point function. This freedom is related to the ordinary renormalization freedom when coinciding point limits are taken.
%
%
\end{remark}

Now we can prove the proposition
\begin{proposition} \label{prop:linearised RG operator tame smooth}
The linearisation of the RG operator
\[
L(u) v = \partial_k v -  \sigma \partial_\phi^2 v
\]
is tame smooth.
\end{proposition}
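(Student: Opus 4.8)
The plan is to show that the map $L\colon (u_b+\mathcal{U})\times F_0\to F$ given by $L(u)v = \partial_k v - \sigma(u)\partial_\phi^2 v$ is smooth and tame, by decomposing it into pieces whose tameness is already established. First I would note that $L$ is \emph{bilinear} in the pair $(\sigma(u), v)$ in the sense that it is linear in $v$ for fixed $u$, and its coefficient is the single function $\sigma(u)$; so the whole operator is built from: (a) the linear partial differential operators $v\mapsto \partial_k v$ and $v\mapsto \partial_\phi^2 v$, which are manifestly tame of degree $1$ and $2$ respectively (they are coordinate derivatives on the space of smooth functions over a compact set, and satisfy $\|\partial^\alpha v\|_n\leq \|v\|_{n+|\alpha|}$); (b) the multiplication map $(\sigma, w)\mapsto \sigma\, w$ on $C^\infty(X\times[a,b])$, which is tame by the Leibniz rule together with the interpolation inequality (Corollary~2.2.2 in \cite{Hamilton1982}) already invoked in the proof of Lemma~\ref{lem:G-smooth}; and (c) the map $u\mapsto \sigma(u)$, which was shown to be tame smooth in the preceding Proposition.

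The key steps, in order, would be: (1) recall from the preceding Proposition that $u\mapsto\sigma(u)$ is a tame smooth map from $u_b+\mathcal{U}\subset F_0$ into $F$; (2) observe that the two differentiation operators are tame linear maps $F_0\to F$ (indeed between the graded pieces directly), hence trivially smooth tame; (3) invoke tameness of pointwise multiplication on $F$: for $g,h\in F$ one has $\|gh\|_n\leq C\sum_{j\leq n}\|g\|_j\|h\|_{n-j}\leq C'(\|g\|_0\|h\|_n+\|g\|_n\|h\|_0)$, so $(g,h)\mapsto gh$ is tame of degree $0$, and it is smooth (it is bilinear and continuous); (4) compose: $(u,v)\mapsto \sigma(u)\,\partial_\phi^2 v$ is the composition of the tame smooth map $(u,v)\mapsto(\sigma(u),\partial_\phi^2 v)$ with multiplication, hence tame smooth, since compositions of tame smooth maps are tame smooth (Hamilton, \cite{Hamilton1982}); (5) subtract $\partial_k v$, another tame smooth map, and use that sums/differences of tame smooth maps are tame smooth. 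Concretely one records the estimate
\[
\|L(u)v\|_n \leq \|v\|_{n+1} + C\bigl(\|\sigma(u)\|_0\,\|v\|_{n+2} + \|\sigma(u)\|_n\,\|v\|_2\bigr)
\leq \|v\|_{n+1} + C'\bigl(1+\|u\|_{n+2}\bigr)\,\|v\|_{n+2},
\]
where the last bound uses $\|\sigma(u)\|_n\leq C(1+\|u\|_{n+2})$, which follows from the tameness of $\sigma$ exactly as in Lemma~\ref{lem:G-smooth}. This exhibits $L$ as tame of degree $2$ and base $0$; combined with smoothness of all constituents, $L$ is tame smooth, which is the claim.

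The only real work is bookkeeping: checking that the degree counts compose correctly and that the neighbourhood $\mathcal{U}$ chosen so that $\|u\|_2<A$ (and, via Proposition~\ref{prop:select-sigma}, $\sigma\geq c>0$) is the one on which all the above estimates hold uniformly. I do not expect a genuine obstacle here, because every ingredient --- tameness of $\sigma$, tameness of multiplication, tameness of coordinate derivatives, and the stability of the tame-smooth class under composition and linear combination --- has already been recorded either in the excerpt or in \cite{Hamilton1982}; the proposition is essentially a corollary assembled from these. The mild subtlety worth spelling out is that $L$ depends on \emph{two} arguments $(u,v)$ and is only \emph{affine-linear} in $v$, so one should be careful to state tameness jointly in $(u,v)$ on $(u_b+\mathcal{U})\times F_0$, with the constant in the $n$-th estimate allowed to depend on $n$ but not on $(u,v)$ in the given neighbourhood; this is automatic once one writes the composition as above.
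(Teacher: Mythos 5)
Your proposal is correct and follows essentially the same route as the paper: both rest on the Leibniz rule with Hamilton's interpolation inequality (Corollary~2.2.2) to bound $\|\sigma\,\partial_\phi^2 v\|_n$, the tame smoothness of $u\mapsto\sigma(u)$ established in the preceding proposition, and the closure of tame smooth maps under composition and linear combination. The only cosmetic difference is in how the interpolated term is written ($\|\sigma\|_n\|v\|_2$ versus the paper's $\|\sigma\|_{n+2}\|v\|_0$), which does not affect the tame degree count.
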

\begin{proof}
Since $L$ acts as a second order linear differential operator, its $n-$th  order seminorm is controlled by the $n+2-$th order seminorm of $v$. 
Using the Lebiniz rule and an interpolating argument (see e.g. in Corollary 2.2.2 in \cite{Hamilton1982}) 
\[
\norm{L(u)v}_n \leq \norm{v}_{n+1} + C\left(  \|\sigma\|_0 \norm{v}_{n+2}+\|\sigma\|_{n+2} \norm{v}_{0}  \right)
\]
where $C$ is a constant. $\sigma$ is tame smooth and the composition of tame smooth maps is tame smooth, and thus $L:(\mathcal{U})\times F_0 \to F$ is tame smooth.
\end{proof}

\subsection{The linearisation of the RG operator is invertible, and the inverse is tame smooth}
If $\sigma \geq c>0$ on $X\times [a,b]$,
%
the linearised RG operator $L(u)$ on $X\times [a,b]$ has the form of a parabolic equation. 
The existence and uniqueness of an inverse which satisfies the chosen
boundary conditions
\[
E(g)(\phi,a) = 0,\qquad  E(g)|_{\partial X\times [a,b]}=0,  \qquad g\in C^\infty_0(X\times[a,b])
\]
is known \cite{Friedman}. Furthermore, by an application of the maximum principle, it is possible to prove that $E$ is continuous with respect to the uniform norm; see e.g. Section 3 of Chapter 2 in \cite{Friedman}. We collect these results in the following Proposition.

\begin{proposition} \label{prop: existence inverse E}
Consider the linearised RG operator $L$. Assume that $\sigma(u)$ is positive for every $u\in \mathcal{U}\subset F$. 
Then, it exists an unique inverse $E:F \to F_0$  which is compatible with the initial and boundary conditions, thus satisfying 
\[
E(L(g))=L(E(g))=g, \qquad g\in F_0 \ .
\]
Moreover, the inverse is continuous with respect to the uniform norm. More precisely, it exists a positive constant $C>0$ such that
\[
\| E(g) \|_0 <  C   \| g\|_0.
\]
\end{proposition}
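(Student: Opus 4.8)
The plan is to recognise $L(u)$ as a uniformly parabolic operator on the cylinder $X\times[a,b]$ with smooth coefficients and a strictly positive principal coefficient $\sigma(u)\geq c>0$, and then quote the classical linear parabolic theory. First I would rewrite $L(u)v=\partial_k v-\sigma(u)\partial_\phi^2 v$ with $k$ playing the role of the time variable and $\phi$ the space variable; since $\sigma(u)$ is smooth (it is tame smooth by the preceding proposition) and bounded below by $c$, the operator is strictly parabolic in the sense of \cite{Friedman}. The initial-boundary value problem
\[
L(u)v=g \ \text{on}\ X\times(a,b],\qquad v(\phi,a)=0,\qquad v|_{\partial X\times[a,b]}=0
\]
then has, for each $g\in C^\infty_0(X\times[a,b])$, a unique smooth solution $v$; this is the standard existence-uniqueness-regularity package for linear parabolic Cauchy–Dirichlet problems (Chapters 1–3 of \cite{Friedman}), and the compatibility of $g$ at the parabolic corner is automatic since $g$ has compact support in the interior. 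Defining $E(g):=v$ gives a well-defined linear map $E:F\to F_0$, and the uniqueness statement yields $E(L(g))=L(E(g))=g$ for $g\in F_0$.

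Next I would establish the $L^\infty$ bound. This is where the maximum principle enters: since $L(u)=\partial_k-\sigma(u)\partial_\phi^2$ has no zeroth-order term and $\sigma(u)>0$, the weak maximum principle for parabolic operators applies directly. I would compare $v=E(g)$ with the barrier functions $\pm\|g\|_0\,(k-a)$, which satisfy $L(u)(\pm\|g\|_0(k-a))=\pm\|g\|_0$ and vanish at $k=a$; since $L(u)(v\mp\|g\|_0(k-a))=g\mp\|g\|_0\leq 0$ (resp. $\geq 0$) and the relevant combination is $\leq 0$ (resp. $\geq 0$) on the parabolic boundary $(\{a\}\times X)\cup(\partial X\times[a,b])$, the maximum principle gives $|v(\phi,k)|\leq\|g\|_0\,(k-a)\leq(b-a)\|g\|_0$ on the whole cylinder. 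Taking the supremum yields $\|E(g)\|_0\leq C\|g\|_0$ with $C=b-a$, which is the asserted estimate; see also Section 3 of Chapter 2 in \cite{Friedman} for the same conclusion.

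The main obstacle is purely one of \emph{citation hygiene} rather than of mathematics: the classical parabolic theory in \cite{Friedman} is phrased for operators on Euclidean domains $\Omega\times(0,T]$ with Hölder-regular coefficients, so one must check that the present setup — $X\subset\mathbb{R}$ compact with smooth boundary, $\sigma(u)\in C^\infty$ uniformly bounded between $c$ and $\|\sigma(u)\|_0$, and data $g\in C^\infty_0$ — genuinely falls under those hypotheses (it does: smoothness is stronger than Hölder, and $1$-dimensional $X$ causes no trouble). A secondary point to verify is that the solution is globally smooth up to the boundary and the corner; this follows from the standard compatibility conditions being trivially satisfied because $g$, $\psi=0$, $\beta=0$ all vanish to infinite order there, so no corner singularities arise. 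With these checks the Proposition follows immediately; the finer tame estimates on higher seminorms of $E$ are deferred to the subsequent Proposition \ref{prop: inverse is tame smooth}.
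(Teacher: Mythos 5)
Your proposal is correct and follows essentially the same route as the paper, which proves this proposition simply by citing the classical existence--uniqueness--regularity theory for linear parabolic initial-boundary value problems and the maximum principle from Section 3 of Chapter 2 of \cite{Friedman}. The only difference is that you make the maximum-principle step explicit via the barrier functions $\pm\|g\|_0(k-a)$, yielding the concrete constant $C=b-a$, whereas the paper leaves this entirely to the citation.
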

%

We now pass to analyse the regularity of $E$, which is a necessary condition to apply the Nash-Moser Theorem. 

\begin{proposition} \label{prop: inverse is tame smooth}
Consider the case where $\sigma\geq c > 0$, let $u\in \mathcal{U}\subset F_0 $ such that $\|u\|_4\leq A$, and assume that
$\sup_{i\in \{\phi,k\}} |D_i \log \sigma(u)|< \epsilon$ with a sufficiently small $\epsilon$.
%
%
The inverse $E$ of the linearized RG operator $L$ is tame smooth.
\end{proposition}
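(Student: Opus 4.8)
The plan is to show that the solution operator $E$ of the parabolic problem
\[
L(u)v = \partial_k v - \sigma(u)\,\partial_\phi^2 v = g,\qquad v(\phi,a)=0,\quad v|_{\partial X\times[a,b]}=0
\]
satisfies tame estimates of the form $\|E(g)\|_n \le C\big(\|g\|_{n+r} + (1+\|u\|_{n+r})\|g\|_0\big)$ on the relevant neighbourhood, and that it depends smoothly and tamely on the pair $(u,g)$. The key point is that although $E$ exists and is $L^\infty$-bounded by Proposition \ref{prop: existence inverse E}, tameness requires controlling \emph{all} Sobolev/sup seminorms of $v$ by those of $g$ and $u$ with a fixed loss of derivatives $r$. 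First I would differentiate the equation $L(u)v = g$: applying a multi-index derivative $D^\alpha$ and commuting it past $L(u)$ produces $L(u)(D^\alpha v) = D^\alpha g + [\,D^\alpha, \sigma\partial_\phi^2\,]v$, where the commutator is a sum of terms $D^\beta\sigma \cdot \partial_\phi^2 D^\gamma v$ with $|\beta|+|\gamma|<|\alpha|$, hence of strictly lower order in $v$. This sets up an induction on $n=|\alpha|$.

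The inductive step proceeds via the standard parabolic energy (or, following \cite{Friedman}, Schauder-type) estimates for $L(u)$ with positive conductivity $\sigma\ge c>0$: these give $\|v\|_{n+2}$-type control by $\|L(u)v\|_n$ plus lower-order norms of $v$, with constants depending only on $c$, on $\|\sigma\|_{n}$, and on the geometry of $X\times[a,b]$ — this is exactly where the smallness hypotheses enter. The assumption $\|u\|_4\le A$ keeps the low-order coefficients of $\sigma$ (hence the ellipticity/parabolicity constants and the first Schauder constant) under uniform control, and the hypothesis $\sup_i|D_i\log\sigma|<\epsilon$ ensures the first-order coefficients arising from $\partial_\phi\sigma$ are small enough that the a priori estimate can be closed without a loss beyond the standard one. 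Feeding in the commutator terms — estimated by the interpolation inequality of Corollary 2.2.2 in \cite{Hamilton1982} together with the tame smoothness of $u\mapsto\sigma(u)$ already established — one obtains by induction
\[
\|E(g)\|_n \le C\big(\|g\|_{n+r} + (1+\|u\|_{n+r})\|g\|_0\big),
\]
which is the tame estimate with base $b$ and degree $r$ (here $r=2$, or a small fixed number after accounting for the Schauder norms). Smoothness of $E$ in $(u,g)$ follows because $E$ is linear in $g$, and its derivative in the direction $h$ of $u$ solves $L(u)\big(D_uE(g)\cdot h\big) = \big(D_u\sigma(u)\cdot h\big)\partial_\phi^2 E(g)$ with zero initial/boundary data, i.e. is again given by $E$ applied to a tame smooth expression in $u$; iterating gives tame estimates for all derivatives $D^jE$, and joint continuity follows from the same energy estimates applied to differences.

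The main obstacle I expect is the derivation — or careful quotation — of the higher-order tame a priori estimate for the non-constant-coefficient parabolic operator $L(u)$ in a form uniform over the admissible $u$: \cite{Friedman} supplies existence, uniqueness and the maximum-principle $L^\infty$ bound, but the $n$-dependent Schauder/Sobolev estimates with constants tracked explicitly in terms of $\|\sigma\|_n$ (and hence $\|u\|_{n+2}$) must be assembled and combined with Hamilton's interpolation lemma so that the loss of derivatives is \emph{fixed} and independent of $n$. The role of the two quantitative smallness assumptions ($\|u\|_4\le A$ and smallness of $D_i\log\sigma$) is precisely to make the base-level constants in this estimate uniform and to absorb the first-order drift term; I would state the required parabolic estimate as a lemma, cite \cite{Friedman} (and \cite{Dappiaggi2020, Murro2022}) for its proof, and then spend the bulk of the argument on the bookkeeping of commutators and the induction.
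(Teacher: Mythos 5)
Your overall architecture --- induction on the number of derivatives, commuting $D^\alpha$ through $L(u)$, Hamilton's interpolation inequality, and smoothness of $E$ by implicitly differentiating $L(u)E(g)=g$ --- matches the paper's, but the engine that closes the induction is genuinely different, and your version has a soft spot exactly where you suspect it. You propose to close each inductive step with higher-order parabolic Schauder/energy estimates of the form $\|v\|_{n+2}\lesssim\|L(u)v\|_{n}+\dots$ with constants tracked in $\|\sigma\|_n$; but \cite{Friedman} does not supply these in tame form uniformly in $n$, so they would have to be proved, not cited. Relatedly, your claim that the commutator $[D^\alpha,\sigma\partial_\phi^2]v$ consists of terms of strictly lower order in $v$ is not literally true: the worst term $D\sigma\cdot\partial_\phi^2 D^{\gamma}v$ with $|\gamma|=|\alpha|-1$ carries $|\alpha|+1$ derivatives of $v$, and it only becomes "lower order" once the derivative-gaining Schauder estimate is already in hand.

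The paper sidesteps all of this. It uses only the maximum-principle bound $\|v\|_0\le C\|g\|_0$ of Proposition \ref{prop: existence inverse E} (no derivative gain whatsoever) and, in the commutator identity, substitutes $\sigma\partial_\phi^2 v=\partial_k v-L(\sigma)v$ back from the equation itself, giving
\[
L(\sigma)Dv \;=\; DL(\sigma)v+\frac{D\sigma}{\sigma}\left(L(\sigma)v-\partial_k v\right) .
\]
This converts the dangerous second-order commutator term into $D(\log\sigma)\,\partial_k v$, which is first order in $v$; the hypothesis $\sup_i|D_i\log\sigma|<\epsilon$ is then used not merely to "absorb a first-order drift term," as you describe, but as the entire mechanism by which the top-order norm $\|v\|_{n+1}$ appearing on the right is absorbed into the left, closing the induction and yielding $\|v\|_n\le C\bigl(\|g\|_n+\|g\|_0\|\sigma\|_{n+1}\bigr)$ --- a tame estimate with no loss of derivatives on $g$ at all, the loss sitting only in $\sigma$ and hence in $u$. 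Both estimates would feed Nash--Moser, but if you keep your route you must actually establish the $n$-th order tame parabolic estimate; I would recommend adopting the substitution trick instead, which reduces the whole regularity argument to the $L^\infty$ maximum principle plus smallness of $D\log\sigma$.
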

\begin{proof}
We first observe that $L(u)$ depends on $u$ only through $\sigma$. Furthermore, $\sigma$ is a tame map of $u$. The composition of tame maps is tame, and so, to prove the statement, it suffices to study how $L$ depends on $\sigma$. To this end, with a little abuse of notation in this proof we shall denote $L(u)$ by $L(\sigma(u))$ and we estimate how $L$ depends on $\sigma$.
Consider 
$L(\sigma)(v)=g$. We look for an estimate which permits to control the higher derivative of $v$ with those of $g$. We start with two Lemmas.

\begin{lemma}\label{lemma:v1}
Under the hypothesis of Proposition \ref{prop: inverse is tame smooth},
the following estimate holds.
\[
\|v\|_1 <  
C \left(
\|g\|_1
+
\| \sigma \|_1 \| g\|_0  
\right)
\]
\end{lemma}
\begin{proof}
The uniform continuity of $E$ stated in Proposition \ref{prop: existence inverse E} implies that if $L(\sigma)v=g$,
\[
\| v \|_0 < C \|g \|_0. 
\]  
We apply this continuity result to $D v$ where $D\in \{\partial_\phi, \partial_k\}$. 
We have
\[
\|D v\|_0 < C \|L(\sigma) D v\|_0.
\]
We observe that 
\begin{equation}\label{eq:LDV}
\begin{aligned}
L(\sigma)Dv &= D L(\sigma)v - D(\sigma) \partial_\phi^2 v \\
 &= D L(\sigma)v + \frac{D(\sigma)}{\sigma}\left( L(\sigma)v - \partial_k v  \right) 
\end{aligned}
\end{equation}
Hence, the uniform continuity of $E$ and the fact that $\sigma\geq c>0$ implies that
\begin{align*}
\|Dv\|_0  & < C \left(
\|D L(\sigma) v\|_0
+
\|D \log(\sigma) \|_0  (\| L(\sigma) v\|_0  + \|v\|_1 )
\right)
\\
& < C \left(
\|D g\|_0
+
\|D\log(\sigma) \|_0  (\| g\|_0  + \|v\|_1 )
\right)
\end{align*}
considering all possible $D$, using the uniform continuity of $E$ and the fact that $1/\sigma > 1/c'$, we obtain
\begin{align*}
\|v\|_1 &\leq \left( \|v\|_0 + \sum_{D\in \{\partial_\phi,\partial_k\}} \|Dv\|_0\right)
<
C \left(
\|g\|_1
+
\| \sigma \|_1 \| g\|_0  
+ 
\sup_D\| D \log(\sigma) \|_0 
\|v\|_1 \right)
\end{align*}
hence
\begin{align*}
(1-C \sup_D\| D \log(\sigma) \|_0)\|v\|_1 
<
C \left(
\|g\|_1
+
\| \sigma \|_1 \| g\|_0  
\right).
\end{align*}
Notice that by hypothesis, $ \sup_D\| D \log(\sigma) \|_0 \leq \epsilon$, 
hence, if $\epsilon $ is chosen sufficiently small
%
%
\begin{equation}\label{eq:stima-brutta}
(1-C \sup_D\| D \log(\sigma) \|_0)\geq c'>0. 
\end{equation}
and 
%
%
%
\begin{align*}
\|v\|_1 &< \frac{1}{(1-C \sup_D\| D \sigma \|_0)}
C \left(
\|g\|_1
+
\| \sigma \|_1 \| g\|_0  
\right)
\end{align*}
from which the thesis follows.
%
%
\end{proof}

\begin{lemma}\label{lemma:vn}
Under the hypothesis of Proposition \ref{prop: inverse is tame smooth},
 it holds that for every $n$ 
\begin{equation}\label{eq:ineq-norm-L}
\|v\|_n < C\left(\|g\|_n +  \|g\|_0 \|\sigma\|_{n+1} \right)\ .
\end{equation}
\end{lemma}
\begin{proof}
We prove it by induction. 
The case $n=1$ follows from Lemma \ref{lemma:v1} and the standard property $\|\sigma\|_1 \leq \|\sigma\|_2$.
%
We assume now that inequality \eqref{eq:ineq-norm-L} holds up $n$. To prove that it holds also for the case $n+1$ we apply it to $D v$ where $D\in \{\partial_\phi, \partial_k\}$. We have
\begin{equation}\label{eq:inductive-step}
\|D v\|_n < C\left(\| L(\sigma)Dv \|_n +  \|L(\sigma)Dv\|_0 \|\sigma\|_{n+1} \right).
\end{equation}
Recalling \eqref{eq:LDV}, by Leibniz rule, the interpolating argument (Corollary 2.2.2 in \cite{Hamilton1982}) and the fact that $\sigma\geq c'>0$ we have 
\begin{align*}
\| L(\sigma)Dv \|_n  
&<  \|g\|_{n+1} + \| {D(\log \sigma)}  g\|_n + \| D(\log \sigma) \partial_k v  \|_n
\\
&<  \|g\|_{n+1} +  C\left(\| \log{\sigma}\|_0  \|g\|_{n+1} + \| \log{\sigma}\|_{n+1}  \|g\|_0
+ \| \frac{D(\sigma)}{\sigma}\|_0\| \partial_k v  \|_n
\right.
\\
& \hspace{1cm}+
\left.\| \frac{D(\sigma)}{\sigma}\|_n \|\partial_k v  \|_0 \right)
\\
&<  C\left( \bigg (1+\| \log{\sigma}\|_0 \right) \|g\|_{n+1}  + \| \log{\sigma}\|_{n+1}  \|g\|_0
+ \| D(\log\sigma)\|_0 \| v  \|_{n+1} \\
&  \hspace{1cm}+
\| \frac{D(\sigma)}{\sigma}\|_n \|\partial_k v  \|_0 \bigg )
\end{align*}
From the last inequality, using the results of Lemma \ref{lemma:v1}, it thus follows that
\begin{equation}\label{eq:stepA}
\begin{aligned}
\| L(\sigma)Dv \|_n  
&<  \left(1+\| \log{\sigma}\|_0 \right) \|g\|_{n+1}  + \| \log{\sigma}\|_{n+1}  \|g\|_0 \\
&+ \| D(\log\sigma)\|_0 \| v  \|_{n+1}
+
\| \frac{D(\sigma)}{\sigma}\|_n (\|g\|_1+ \|\sigma\|_1\|g\|_0)
\end{aligned}
\end{equation}
Furthermore from \eqref{eq:LDV} and Lemma \ref{lemma:v1}, we can prove that
\begin{equation}\label{eq:stepB}
\begin{aligned}
\|L(\sigma)Dv\|_0  &< 
\|D L(\sigma) v\|_0
+
\|D\sigma \|_0  (\| L(\sigma) v\|_0  + \|v\|_1 )
\\
& < 
\|D g\|_0
+
\|D\sigma \|_0  (\| g\|_0  + \|v\|_1 )
\\
& < 
\|g\|_1
+
\|\sigma \|_1  ((1+\|\sigma\|_1)\|g\|_0  + \|g\|_1) \ .
\end{aligned}
\end{equation}
Hence, combining the two inequalities \eqref{eq:stepA} and \eqref{eq:stepB} in \eqref{eq:inductive-step}
\begin{gather*}
(1-C \| D(\log(\sigma))\|_0) \|v\|_{n+1} < C\left[
\left(1+\| \log{\sigma}\|_0 \right) \|g\|_{n+1}  + \| \log{\sigma}\|_{n+1}  \|g\|_0
+
\right.
\\
\left.
+
\| D(\log{\sigma})\|_n (\|g\|_1+ \|\sigma\|_1\|g\|_0)
+
\right.
\\
\left.
+\left(
\|g\|_1
+\|\sigma \|_1  (\| g\|_0  + \|\sigma\|_1\| g\|_0  + \|g\|_1)
\right)\|\sigma\|_{n+1}
\right]
\end{gather*}
Notice that, as stated in \eqref{eq:stima-brutta}, $(1-C \| D(\log(\sigma))\|_0)>0$, and so
\begin{gather*}
\|v\|_{n+1} < C\left[
\left(\left(1+\| \log{\sigma}\|_0 \right) \|g\|_{n+1}  + \| \log{\sigma}\|_{n+1}  \|g\|_0
+
\right.
\right.
\\
\left.
\left.
+
\| D(\log\sigma)\|_n (\|g\|_1+ \|\sigma\|_1\|g\|_0)
\right)
+
\right.
\\
\left.
+\left(
\|g\|_1
+
\|\sigma \|_1  ((1  + \|\sigma\|_1)\| g\|_0  + \|g\|_1 )
\right)\|\sigma\|_{ n+1}
\right] \ .
\end{gather*}
By the interpolating argument, it holds that $\|g\|_1\|h\|_{n} \leq C(\|g\|_0\|h\|_{n+1}+\|g\|_{n+1}\|h\|_{0})$. Moreover, we have
$\|\sigma\|_1 <A$ from which it follows that
$\|D\log\sigma\|_n \leq C \|\sigma\|_{n+1}$, and 
thus we obtain 
\begin{gather*}
\|v\|_{n+1} < C\left(\|g\|_{n+1}+ \|g\|_0 \|\sigma\|_{n+2}
\right)
\end{gather*}
and the thesis is proved.
%
%
%
%
%
\end{proof}
Estimates of Lemma \ref{lemma:vn} implies that $E f$ is a tame map of $\sigma$ and $f$.
The map $\sigma(u)$ is a smooth tame function of $u$. The composition of tame maps is tame, and so we have the result. 
\end{proof}

To prove that $E$ is tame smooth we made two assumptions for $\sigma$: first, that $\sigma>c$, and second, that 
$\partial_{i} \log(\sigma) <\epsilon'$ for small $\epsilon'$.
We have already seen in Proposition \ref{prop:select-sigma} that $u_b$ and $\mathcal{U}$ can be chosen in such a way that, for every $u\in \mathcal{U}$,
$\sigma(u_b+u)>c$.
We now want to prove that the second requirement also holds.

\begin{proposition}
Let $\epsilon'>0$ and consider the initial conditions given in \eqref{eq:initial-boundary-condition}, and let
$u_b$ in \eqref{eq:ub} be such that  $\|u_b\|_3\leq A$.
If $[a,b]$ is such that $b-a$ together with $A$ are sufficiently small,
it holds that 
\[
|\partial_{i} \log(\sigma)| < \epsilon' 
\]
for every $u\in \mathcal{U}$, recalling that $\|u\|_4<A$.
\end{proposition}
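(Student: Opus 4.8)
The plan is to estimate $\partial_i\log\sigma=\sigma^{-1}\,\partial_i\sigma$ by controlling the two factors with the results already established for $\sigma$. By Proposition \ref{prop:select-sigma} one may fix $u_b$ with $\|u_b\|_3\leq A$ and a neighbourhood $\mathcal U\subset F_0$ so that $\sigma(u_b+u)\geq c>0$ for every $u\in\mathcal U$, so it only remains to show that $\|\partial_i\sigma\|_0$ can be made smaller than $c\,\epsilon'$ by shrinking $A$.

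First I would exploit the structure noted right after \eqref{eq:sigma}: with the choice \eqref{eq:qk} of $q_k$, the function $\sigma=\sigma(\tilde u)$, $\tilde u=u_b+u$, depends on the point $(\phi,k)\in X\times[a,b]$ only through the real number $\partial_\phi^2\tilde u(\phi,k)$ (since $\partial_k q_k$ is constant, $w,f$ are fixed, and $\Delta_R^u$ is the retarded propagator of $P_0+f\,\partial_\phi^2 u$ with $\partial_\phi^2 u$ spacetime-constant). Hence $\sigma(\phi,k)=\Xi\!\left(\partial_\phi^2\tilde u(\phi,k)\right)$ for a smooth function $\Xi$ of one variable, and differentiating in $\phi$ or $k$ and comparing with the first functional derivative of $\sigma$ in \eqref{def:sigma1} identifies $\Xi'$ with the functional $A_2(\tilde u)$ of \eqref{def:Al}. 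The chain rule then gives
\[
\partial_i\sigma(\phi,k)=A_2(\tilde u)(\phi,k)\;\partial_\phi^2\partial_i\tilde u(\phi,k),\qquad i\in\{\phi,k\}.
\]
Both factors are now easy to bound. Since $\|\tilde u\|_2\leq\|u_b\|_2+\|u\|_2\leq\|u_b\|_3+\|u\|_4<2A$, Lemma \ref{lem:Al} (applied with threshold $2A$) gives $\|A_2(\tilde u)\|_0\leq C(1+2A)\leq C'$ with $C'$ independent of $A$ once $A\leq A_0$, the constant there being uniform for $\tilde u$ in a fixed $\|\cdot\|_2$-ball because it only involves the fixed region $\supp f$ and the estimates of Lemma \ref{eq:lemma-DeltaUf}. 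The other factor is a third-order derivative, so $|\partial_\phi^2\partial_i\tilde u|\leq\|\tilde u\|_3\leq\|u_b\|_3+\|u\|_3\leq\|u_b\|_3+\|u\|_4<2A$. Combining with $\sigma\geq c$,
\[
\sup_{i\in\{\phi,k\}}\big|\partial_i\log\sigma\big|\;\leq\;\frac{\|A_2(\tilde u)\|_0\,\|\tilde u\|_3}{c}\;\leq\;\frac{2C'A}{c}\;<\;\epsilon'
\]
as soon as $A<c\,\epsilon'/(2C')$ (and $A\leq A_0$), which is the claim.

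The smallness of $b-a$, together with that of $A$, enters through the construction underlying the hypotheses rather than through a further analytic estimate: it is what makes it possible to choose a single $u_b$ realising the initial and boundary data \eqref{eq:initial-boundary-condition} with $\|u_b\|_3\leq A$ — one takes $u_b$ almost independent of $k$ near $\partial X$, which is consistent since the compatibility of $\psi$ and $\beta$ forces $\beta(\cdot,a)=\psi|_{\partial X}$ and, for $b-a$ small, keeps $\beta$ close to that value on $[a,b]$ — and simultaneously to fix a common neighbourhood $\mathcal U=\{u\in F_0:\|u\|_4<A\}$ on which both $\sigma\geq c$ (Proposition \ref{prop:select-sigma}) and the bound above hold. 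I expect the only delicate point to be exactly this bookkeeping of nested smallness conditions, i.e. checking that decreasing $A$ does not deteriorate the constants $c$ and $C'$; this holds because $c$ and $C'$ are controlled, via Lemmas \ref{eq:lemma-DeltaUf} and \ref{lem:Al}, by the data $f$, $w$, $q_k$ (and $\supp f$) fixed once and for all, so no estimate beyond those lemmas is needed.
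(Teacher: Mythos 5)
Your argument is correct, but it takes a genuinely different route from the paper's own proof. You exploit the pointwise chain rule: since, with $q_k$ as in \eqref{eq:qk}, $\sigma(\phi,k)$ depends on $(\phi,k)$ only through the real number $\partial_\phi^2\tilde u(\phi,k)$, you factor $\partial_i\sigma = A_2(\tilde u)\,\partial_i\partial_\phi^2\tilde u$ (consistently with \eqref{def:sigma1}) and bound the two factors separately, by Lemma \ref{lem:Al} and by $|\partial_i\partial_\phi^2\tilde u|\leq\|\tilde u\|_3\leq 2A$, obtaining $|\partial_i\log\sigma|\leq C(1+2A)\,2A/c$, which is controlled by $A$ alone. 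The paper instead splits
\[
D\sigma(\phi,k)=D\sigma(\phi,a)+\int_a^k\partial_\kappa D\sigma(\phi,\kappa)\,\dd\kappa \ ,
\]
bounds the initial term by $\|u_b\|_3\leq A$ (using $u(\cdot,a)=0$ for $u\in F_0$) and the integral term by $(b-a)\|\sigma\|_2\leq C(b-a)(1+A)$ via the tame estimate for $\sigma$; since $(1+A)$ does not vanish as $A\to 0$, the smallness of $b-a$ is there a genuine analytic ingredient, whereas in your proof it only enters (plausibly, as you note) in making the hypothesis $\|u_b\|_3\leq A$ realisable. Your route therefore yields a marginally stronger conclusion — the estimate holds for any $[a,b]$ once $A$ is small — at the price of relying on the explicit one-variable structure $\sigma=\Xi(\partial_\phi^2\tilde u)$ and the identification $\Xi'=A_2$, while the paper's route uses only the generic tame bound $\|\sigma\|_2\leq C(1+\|u\|_4)$ and would survive if $\sigma$ had a residual explicit dependence on $k$. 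Both arguments rest on the same two inputs: $\sigma\geq c$ from Proposition \ref{prop:select-sigma} and the uniform control of $A_2$ from Lemmas \ref{eq:lemma-DeltaUf} and \ref{lem:Al}, with constants depending only on the fixed data $f$, $w$, $q_k$, so your bookkeeping of the nested smallness conditions is sound.
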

\begin{proof}
Let $D$ be either $\partial_\phi$ or $\partial_k$, and
notice that $D\log \sigma = D\sigma /\sigma$. In proposition \ref{prop:select-sigma} we have shown that there are choices of $w$ for which $1/\sigma< 1/c$.
We now observe that 
\[
 D \sigma(\phi,k) = D \sigma(\phi,a) +  \int_{a}^k \partial_{\kappa}D\sigma(\phi,\kappa) \dd \kappa \ .
\]
Therefore, since both $\sigma$ and $D(\sigma)$ are smooth, we have that 
\[
\| D \sigma(\phi,k)\|_0 \leq  \| D \sigma(\phi,a)\|_0 +  (b-a)  \| \sigma\|_2 < C ( A + (b-a) (1+A)) \ ,
\]
where we used the fact that $\sigma$ is tame, and in particular $\|\sigma\|_{2} \leq C(1+\|u\|_4)\leq C(1+A)$. Furthermore, $\sigma(\phi,a)$ depends on $\phi$ and $a$ through $u_b+u$; hence, in view of the continuity of $\sigma$,  
\[
| D \sigma(\phi,a)| \leq C| D\partial_{\phi}^2u_b(\phi,a) |\leq \|u_b\|_3\leq A
\] 
where we used the fact that $u=0$ at $k=a$ and the fact that we can chose $u_b$ in such a way that 
$\|u_b\|_3<A$.
Therefore,  
\[
|D \log\sigma | =  \frac{|D \sigma(\phi,k)|}{\sigma}\leq \frac{C}{c} ( A + (b-a) (1+A)) \leq \epsilon' \ ,
\]
where we have chosen both $A$ and $b-a$ sufficiently small to make the last inequality valid. 
\end{proof}

\begin{remark}
We recall that in $u_b$ there is a contribution which is $-q_k \phi^2$. For  $q_k$ given in \eqref{eq:qk} it is in general not possible to make the choice $\|q_k \phi^2\|_3\leq A$ for small $A$ because of the constant contribution $k_0 f$  in \eqref{eq:qk}, while the other corrections can be made small with judicious  choices of the chosen parameters. However, as observed above, such a contribution can always be reabsorbed in the mass of the free theory present in $P_0$.
%
%
%
%
%
\end{remark}

\begin{theorem}\label{theorem:existence of solutions}
Under the hypothesis of Proposition \ref{prop: inverse is tame smooth},
the RG operator admits a unique family of tame smooth local inverses, and unique local solutions of the RG flow equations exist.
\end{theorem}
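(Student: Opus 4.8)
The statement is an assembly result: it follows by checking the two hypotheses of the Nash--Moser theorem in Hamilton's formulation for the RG operator $\mathcal{RG}\colon \mathcal{U}\subset F_0\to F$ of Definition \ref{def:RG} and then invoking that theorem. The plan is therefore as follows. First I would fix, once and for all, the auxiliary data so that every positivity and smallness requirement appearing in Propositions \ref{prop:select-sigma}--\ref{prop: inverse is tame smooth} holds at the same time: choose the smooth symmetric function $w\in C^{\infty}(\mathcal{M}^2)$ as in Proposition \ref{prop:select-sigma} so that $\sigma(0)\geq 2\epsilon C+c$; take boundary and initial data with $\|\beta\|_2+\|\psi\|_2<\epsilon$ and a representative $u_b\in F$ satisfying \eqref{eq:initial-boundary-condition}; and then shrink the constant $A$ and the interval length $b-a$ until, on a suitably small neighbourhood $\mathcal{U}\subset F_0$ of $0$ with $\|u\|_4\leq A$, one has $\sigma(u_b+u)\geq c'>0$ and $\sup_{i\in\{\phi,k\}}|D_i\log\sigma(u_b+u)|<\epsilon'$. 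Each of these is a continuity or smallness condition on finitely many seminorms, and the only potential circularity — the constant $C$ in Proposition \ref{prop:select-sigma} depending on bounds that themselves depend on $A$ — is removed by fixing $w$ before shrinking $A$ and $b-a$, so the choices are jointly admissible.

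With the data fixed, the two Nash--Moser hypotheses are already in hand: by Proposition \ref{prop: RG tame smooth} the map $\mathcal{RG}$ is smooth tame between the tame Fréchet spaces $F_0$ and $F$; by Proposition \ref{prop:linearised RG operator tame smooth} its first derivative $L(u)=D\mathcal{RG}(u)=\partial_k-\sigma(u)\partial_\phi^2$ is smooth tame; since $\sigma(u)\geq c'>0$, $L(u)$ is parabolic, so Proposition \ref{prop: existence inverse E} supplies for every $u\in\mathcal{U}$ a unique two-sided inverse $E(u)\colon F\to F_0$ compatible with the initial and boundary conditions; and Proposition \ref{prop: inverse is tame smooth}, whose hypotheses we arranged to hold, shows that $(u,g)\mapsto E(u)g$ is smooth tame. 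Hence $\mathcal{RG}$ satisfies conditions (1) and (2) of the Nash--Moser theorem, which immediately yields that $\mathcal{RG}$ is locally invertible with a smooth tame inverse, unique on its domain; this is the first assertion.

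For the second assertion I would recover a solution of \eqref{eq:RG for u} by running Hamilton's construction: a solution of the flow is a zero $\mathcal{RG}(u)=0$ with $u\in F_0$, and the limit $u_\infty=\lim_{t\to\infty}u_t$ of the flow \eqref{eq:first-order-equation-recursion} started at $u_0=0$ is such a zero, provided $\mathcal{RG}(0)=\partial_k u_b-G_k(\partial_\phi^2 u_b)$ is small enough in the finitely many seminorms controlling the Nash--Moser neighbourhood. This is arranged by exploiting the remaining freedom in $u_b$: it is constrained only at $k=a$ and on $\partial X\times[a,b]$, so it may be taken to be an approximate solution of the flow (e.g.\ the first-order Taylor expansion in $k-a$ built from $\psi$ and $G_k$, corrected near $\partial X$ to meet $\beta$), which makes $\mathcal{RG}(0)=O(b-a)$, while the $O(1)$ piece of $u_b$ of the form $-q_k\phi^2$ responsible for the obstruction noted in the Remark is reabsorbed into the mass of $P_0$. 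Setting $\tilde u=u_b+u_\infty$ then produces a solution of \eqref{eq:RG for u} with initial value $\psi$ and boundary value $\beta$, and local injectivity of $\mathcal{RG}$ gives its uniqueness.

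The genuine analytic difficulty — the tame estimates $\|v\|_n<C(\|g\|_n+\|g\|_0\|\sigma\|_{n+1})$ for the inverse of the parabolic linearisation, which is precisely where the loss of derivatives is defeated — is already contained in Proposition \ref{prop: inverse is tame smooth}, so nothing new of that kind is required. The only delicate points remaining in the proof of the theorem are bookkeeping ones: that the parameter choices of the first paragraph can be made simultaneously without circularity, and that $u_b$ can be chosen both to match the prescribed data and to make $\mathcal{RG}(0)$ lie in the domain of $\mathcal{RG}^{-1}$ (equivalently, small enough for the iteration \eqref{eq:first-order-equation-recursion} to converge). I expect the latter — producing a good enough $u_b$ — to be the main thing to get right.
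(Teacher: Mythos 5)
Your proposal is correct and follows essentially the same route as the paper: the paper's proof of Theorem \ref{theorem:existence of solutions} is simply a direct invocation of the Nash--Moser theorem, with Propositions \ref{prop: RG tame smooth}, \ref{prop:linearised RG operator tame smooth}, \ref{prop: existence inverse E}, and \ref{prop: inverse is tame smooth} supplying its hypotheses. The additional bookkeeping you flag (joint admissibility of the parameter choices and the selection of $u_b$ so that $\mathcal{RG}(0)$ is small enough for the iteration \eqref{eq:first-order-equation-recursion} to converge) is left implicit in the paper but is consistent with, and a reasonable elaboration of, its argument.
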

\begin{proof}
The proof is a direct application of the Nash-Moser theorem \cite{Hamilton1982}, which can be applied thanks to the results of Propositions  \ref{prop: RG tame smooth}, \ref{prop:linearised RG operator tame smooth}, \ref{prop: existence inverse E}, and \ref{prop: inverse is tame smooth}.
Actually, it follows from the Nash-Moser theorem that the RG operator admits a unique family of tame smooth local inverses. This guarantees the existence of local solutions of the RG flow equations.
\end{proof}



\section{Acknowledgments}
E.D. is supported by a PhD scholarship of the University of Genoa and by the project GNFM-INdAM Progetto Giovani \textit{Non-linear sigma models and the Lorentzian Wetterich equation}, CUP\textunderscore E53C22001930001.
E.D. and N.P. are grateful for the support of the National Group of Mathematical Physics (GNFM-INdAM).
The research performed by N.P. and E.D. was supported in part by the MIUR Excellence Department Project 2023-2027 awarded to the Dipartimento di Matematica of the University of Genova, CUP\textunderscore D33C23001110001.

We are grateful to Nicolò Drago and Kasia Rejzner for many useful discussions and a thorough reading of the manuscript.


\printbibliography

@book{Friedman,
	address = {Englewood Cliffs N.J. },
	author = {A. Friedman},
	date-added = {2023-10-08 14:22:39 +0200},
	date-modified = {2023-10-08 14:24:15 +0200},
	publisher = {Prentice Hall},
	title = {Partial Differential Equations of Parabolic Type},
	year = {1964}}

@article{Baer2015,
	author = {B{\"a}r, Christian},
    year = {2015},
    month = {02},
    day = {01},
	date-added = {2023-10-23 11:01:02 +0200},
	date-modified = {2023-10-23 11:01:02 +0200},
	doi = {10.1007/s00220-014-2097-7},
	id = {B{\"a}r2015},
	isbn = {1432-0916},
	journal = {Communications in Mathematical Physics},
	number = {3},
	pages = {1585--1615},
	title = {Green-Hyperbolic Operators on Globally Hyperbolic Spacetimes},
	url = {https://doi.org/10.1007/s00220-014-2097-7},
	volume = {333},
	bdsk-url-1 = {https://doi.org/10.1007/s00220-014-2097-7}}

@article{KayWald,
	author = {Bernard S. Kay and Robert M. Wald},
	doi = {10.1016/0370-1573(91)90015-E},
	issn = {0370-1573},
	journal = {Phys. Rep.},
	number = {2},
	pages = {49--136},
	title = {Theorems on the uniqueness and thermal properties of stationary, nonsingular, quasifree states on spacetimes with a bifurcate killing horizon},
	volume = {207},
	year = {1991},
	Bdsk-Url-1 = {https://doi.org/10.1016/0370-1573(91)90015-E}}

@book{Rejzner2016,
    author = "Rejzner, Kasia",
    title = "{Perturbative Algebraic Quantum Field Theory}: {An Introduction for Mathematicians}",
    doi = "10.1007/978-3-319-25901-7",
    isbn = "978-3-319-25899-7, 978-3-319-25901-7",
    publisher = "Springer",
    address = "New York",
    series = "Mathematical Physics Studies",
    year = "2016"
}

@article{Brunetti2001,
	archiveprefix = {arXiv},
	author = {Brunetti, Romeo and Fredenhagen, Klaus and Verch, Rainer},
	doi = {10.1007/s00220-003-0815-7},
	eprint = {math-ph/0112041},
	journal = {Commun. Math. Phys.},
	pages = {31--68},
	reportnumber = {DESY-02-063},
	title = {{The Generally covariant locality principle: A New paradigm for local quantum field theory}},
	volume = {237},
	year = {2003},
	Bdsk-Url-1 = {https://doi.org/10.1007/s00220-003-0815-7}}

@article{Brunetti1999,
	archiveprefix = {arXiv},
	author = {Brunetti, Romeo and Fredenhagen, Klaus},
	doi = {10.1007/s002200050004},
	eprint = {math-ph/9903028},
	journal = {Commun. Math. Phys.},
	pages = {623--661},
	reportnumber = {DESY-99-032},
	title = {{Microlocal analysis and interacting quantum field theories: Renormalization on physical backgrounds}},
	volume = {208},
	year = {2000},
	Bdsk-Url-1 = {https://doi.org/10.1007/s002200050004}}

@article{Brunetti2009,
	archiveprefix = {arXiv},
	author = {Brunetti, Romeo and Duetsch, Michael and Fredenhagen, Klaus},
	doi = {10.4310/ATMP.2009.v13.n5.a7},
	eprint = {0901.2038},
	journal = {Adv. Theor. Math. Phys.},
	number = {5},
	pages = {1541--1599},
	primaryclass = {math-ph},
	title = {{Perturbative Algebraic Quantum Field Theory and the Renormalization Groups}},
	volume = {13},
	year = {2009},
	Bdsk-Url-1 = {https://doi.org/10.4310/ATMP.2009.v13.n5.a7}}

@article{HollandsWald2001a,
	archiveprefix = {arXiv},
	author = {Hollands, Stefan and Wald, Robert M.},
	doi = {10.1007/s002200100540},
	eprint = {gr-qc/0103074},
	journal = {Commun. Math. Phys.},
	pages = {289--326},
	title = {{Local Wick polynomials and time ordered products of quantum fields in curved space-time}},
	volume = {223},
	year = {2001},
	Bdsk-Url-1 = {https://doi.org/10.1007/s002200100540}}

@article{HollandsWald2001b,
	archiveprefix = {arXiv},
	author = {Hollands, Stefan and Wald, Robert M.},
	doi = {10.1007/s00220-002-0719-y},
	eprint = {gr-qc/0111108},
	journal = {Commun. Math. Phys.},
	pages = {309--345},
	title = {{Existence of local covariant time ordered products of quantum fields in curved space-time}},
	volume = {231},
	year = {2002},
	Bdsk-Url-1 = {https://doi.org/10.1007/s00220-002-0719-y}}

@article{HollandsWald2004,
	archiveprefix = {arXiv},
	author = {Hollands, Stefan and Wald, Robert M.},
	doi = {10.1142/S0129055X05002340},
	eprint = {gr-qc/0404074},
	journal = {Rev. Math. Phys.},
	pages = {227--312},
	title = {{Conservation of the stress tensor in interacting quantum field theory in curved spacetimes}},
	volume = {17},
	year = {2005},
	Bdsk-Url-1 = {https://doi.org/10.1142/S0129055X05002340}}

@book{AAQFT15,
	doi = {10.1007/978-3-319-21353-8},
	editor = {Brunetti, Romeo and Dappiaggi, Claudio and Fredenhagen, Klaus and Yngvason, Jakob},
	isbn = {978-3-319-21352-1, 978-3-319-21353-8},
	publisher = {Springer},
	series = {Mathematical Physics Studies},
	title = {{Advances in algebraic quantum field theory}},
	year = {2015},
	Bdsk-Url-1 = {https://doi.org/10.1007/978-3-319-21353-8}}

@article{Drago2015,
	archiveprefix = {arXiv},
	author = {Drago, Nicol\`o and Hack, Thomas-Paul and Pinamonti, Nicola},
	doi = {10.1007/s00023-016-0521-6},
	eprint = {1502.02705},
	journal = {Annales Henri Poincare},
	number = {3},
	pages = {807--868},
	primaryclass = {math-ph},
	title = {{The generalised principle of perturbative agreement and the thermal mass}},
	volume = {18},
	year = {2017},
	Bdsk-Url-1 = {https://doi.org/10.1007/s00023-016-0521-6}}

@book{ReuterSaueressig2019,
    author = "Reuter, Martin and Saueressig, Frank",
    title = "{Quantum Gravity and the Functional Renormalization Group}: {The Road towards Asymptotic Safety}",
    isbn = "978-1-107-10732-8, 978-1-108-67074-6",
    publisher = "Cambridge University Press",
    month = "1",
    year = "2019"
}

@article{ReuterSaueressig2012,
    author = "Reuter, Martin and Saueressig, Frank",
    title = "{Quantum Einstein Gravity}",
    eprint = "1202.2274",
    doi = "10.1088/1367-2630/14/5/055022",
    journal = "New J. Phys.",
    volume = "14",
    pages = "055022",
    year = "2012"
}

@article{Polchinski1983,
	author = {Polchinski, Joseph},
	doi = {10.1016/0550-3213(84)90287-6},
	journal = {Nucl. Phys. B},
	pages = {269--295},
	reportnumber = {HUTP-83-A018},
	title = {{Renormalization and Effective Lagrangians}},
	volume = {231},
	year = {1984},
	Bdsk-Url-1 = {https://doi.org/10.1016/0550-3213(84)90287-6}}

@article{Wetterich1992,
	archiveprefix = {arXiv},
	author = {Wetterich, Christof},
	doi = {10.1016/0370-2693(93)90726-X},
	eprint = {1710.05815},
	journal = {Phys. Lett. B},
	pages = {90--94},
	primaryclass = {hep-th},
	reportnumber = {HD-THEP-92-61},
	title = {{Exact evolution equation for the effective potential}},
	volume = {301},
	year = {1993},
	Bdsk-Url-1 = {https://doi.org/10.1016/0370-2693(93)90726-X}}

@article{Litim2001,
    author = "Litim, Daniel F.",
    title = "{Optimized renormalization group flows}",
    eprint = "hep-th/0103195",
    archivePrefix = "arXiv",
    reportNumber = "CERN-TH-2001-084",
    doi = "10.1103/PhysRevD.64.105007",
    journal = "Phys. Rev. D",
    volume = "64",
    pages = "105007",
    year = "2001"
}

@article{Litim2006,
	archiveprefix = {arXiv},
	author = {Litim, Daniel F. and Pawlowski, Jan M.},
	doi = {10.1088/1126-6708/2006/11/026},
	eprint = {hep-th/0609122},
	journal = {JHEP},
	pages = {026},
	reportnumber = {CERN-PH-TH-2006-183, SHEP-06-27, HD-THEP-06-23},
	title = {{Non-perturbative thermal flows and resummations}},
	volume = {11},
	year = {2006},
	Bdsk-Url-1 = {https://doi.org/10.1088/1126-6708/2006/11/026}}

@article{Berges2000,
	archiveprefix = {arXiv},
	author = {Berges, Juergen and Tetradis, Nikolaos and Wetterich, Christof},
	doi = {10.1016/S0370-1573(01)00098-9},
	eprint = {hep-ph/0005122},
	journal = {Phys. Rept.},
	pages = {223--386},
	reportnumber = {MIT-CTP-2980, HD-THEP-00-26},
	title = {{Nonperturbative renormalization flow in quantum field theory and statistical physics}},
	volume = {363},
	year = {2002},
	Bdsk-Url-1 = {https://doi.org/10.1016/S0370-1573(01)00098-9}}

@article{Dupuis2021,
	archiveprefix = {arXiv},
	author = {Dupuis, N. and Canet, L. and Eichhorn, A. and Metzner, W. and Pawlowski, J. M. and Tissier, M. and Wschebor, N.},
	doi = {10.1016/j.physrep.2021.01.001},
	eprint = {2006.04853},
	journal = {Phys. Rept.},
	pages = {1--114},
	primaryclass = {cond-mat.stat-mech},
	title = {{The nonperturbative functional renormalization group and its applications}},
	volume = {910},
	year = {2021},
	Bdsk-Url-1 = {https://doi.org/10.1016/j.physrep.2021.01.001}}

@article{Fehre2021,
    author = "Fehre, Jannik and Litim, Daniel F. and Pawlowski, Jan M. and Reichert, Manuel",
    title = "{Lorentzian Quantum Gravity and the Graviton Spectral Function}",
    eprint = "2111.13232",
    archivePrefix = "arXiv",
    primaryClass = "hep-th",
    doi = "10.1103/PhysRevLett.130.081501",
    journal = "Phys. Rev. Lett.",
    volume = "130",
    number = "8",
    pages = "081501",
    year = "2023"
}

@article{Wilson73,
	author = {Wilson, K. G. and Kogut, John B.},
	doi = {10.1016/0370-1573(74)90023-4},
	journal = {Phys. Rept.},
	pages = {75--199},
	title = {{The Renormalization group and the epsilon expansion}},
	volume = {12},
	year = {1974},
	Bdsk-Url-1 = {https://doi.org/10.1016/0370-1573(74)90023-4}
}

@article{Wilson1975,
    author = "Wilson, Kenneth G.",
    title = "{The Renormalization Group: Critical Phenomena and the Kondo Problem}",
    reportNumber = "CLNS-296",
    doi = "10.1103/RevModPhys.47.773",
    journal = "Rev. Mod. Phys.",
    volume = "47",
    pages = "773",
    year = "1975"
}

@misc{Fewster2019,
    author = "Fewster, Christopher J. and Rejzner, Kasia",
    title = "{Algebraic Quantum Field Theory -- an introduction}",
    eprint = "1904.04051",
    archivePrefix = "arXiv",
    primaryClass = "hep-th",
    month = "4",
    year = "2019"
}

@article{Reuter1996,
	archiveprefix = {arXiv},
	author = {Reuter, M.},
	doi = {10.1103/PhysRevD.57.971},
	eprint = {hep-th/9605030},
	journal = {Phys. Rev. D},
	pages = {971--985},
	reportnumber = {DESY-96-080},
	title = {{Nonperturbative evolution equation for quantum gravity}},
	volume = {57},
	year = {1998},
	Bdsk-Url-1 = {https://doi.org/10.1103/PhysRevD.57.971}}

@article{Niedermaier2006,
	author = {Niedermaier, Max and Reuter, Martin},
	doi = {10.12942/lrr-2006-5},
	journal = {Living Rev. Rel.},
	pages = {5--173},
	title = {{The Asymptotic Safety Scenario in Quantum Gravity}},
	volume = {9},
	year = {2006},
	Bdsk-Url-1 = {https://doi.org/10.12942/lrr-2006-5}}

@article{Banerjee2022,
    author = "Banerjee, Rudrajit and Niedermaier, Max",
    title = "{The spatial Functional Renormalization Group and Hadamard states on cosmological spacetimes}",
    eprint = "2201.02575",
    archivePrefix = "arXiv",
    primaryClass = "hep-th",
    doi = "10.1016/j.nuclphysb.2022.115814",
    journal = "Nucl. Phys. B",
    volume = "980",
    pages = "115814",
    year = "2022"
}

@article{DDPR2022,
    author = "D'Angelo, Edoardo and Drago, Nicol\`o and Pinamonti, Nicola and Rejzner, Kasia",
    title = "{An algebraic QFT approach to the Wetterich equation on Lorentzian manifolds}",
    eprint = "2202.07580",
    archivePrefix = "arXiv",
    primaryClass = "math-ph",
    month = "7",
    year = "2023",
    journal = "Ann. Henri Poincar\'e",
    doi = "https://doi.org/10.1007/s00023-023-01348-4"
}

@article{Pawlowski2005,
    author = "Pawlowski, Jan M.",
    title = "{Aspects of the functional renormalisation group}",
    eprint = "hep-th/0512261",
    archivePrefix = "arXiv",
    reportNumber = "HD-THEP-05-28",
    doi = "10.1016/j.aop.2007.01.007",
    journal = "Annals Phys.",
    volume = "322",
    pages = "2831--2915",
    year = "2007"
}

@article{Dappiaggi2020,
author = {Dappiaggi, Claudio and Drago, Nicol\`{o} and Rinaldi, Paolo},
title = {The algebra of Wick polynomials of a scalar field on a Riemannian manifold},
journal = {Reviews in Mathematical Physics},
volume = {32},
number = {08},
pages = {2050023},
year = {2020},
doi = {10.1142/S0129055X20500233},
URL = { https://doi.org/10.1142/S0129055X20500233},
eprint = { https://doi.org/10.1142/S0129055X20500233},
}

@article{Hamilton1982,
    author = "Hamilton, Richard S.",
    title = "{The inverse function theorem of Nash and Moser}",
    journal = "Bull. Amer. Math. Soc.",
    volume = "7",
    pages = "65--222",
    year = "1982"
}

@book{Percaccibook2017,
    author = "Percacci, Robert",
    title = "{An Introduction to Covariant Quantum Gravity and Asymptotic Safety}",
    doi = "10.1142/10369",
    isbn = "978-981-320-717-2, 978-981-320-719-6",
    publisher = "World Scientific",
    series = "100 Years of General Relativity",
    volume = "3",
    year = "2017"
}

@online{Saueressig2023,
    author = "Saueressig, Frank",
    title = "{The Functional Renormalization Group in Quantum Gravity}",
    eprint = "2302.14152",
    archivePrefix = "arXiv",
    primaryClass = "hep-th",
    month = "2",
    year = "2023"
}

@article{Alexandre2001,
    author = "Alexandre, Jean and Polonyi, Janos",
    title = "{Functional Callan-Symanzik equation}",
    eprint = "hep-th/0010128",
    archivePrefix = "arXiv",
    doi = "10.1006/aphy.2000.6109",
    journal = "Annals Phys.",
    volume = "288",
    pages = "37--51",
    year = "2001"
}

@online{DR2023,
    author = "D'Angelo, Edoardo and Rejzner, Kasia",
    title = "{Lorentzian Wetterich equation for gauge theories}",
    eprint = "2303.01479",
    archivePrefix = "arXiv",
    primaryClass = "math-ph",
    month = "3",
    year = "2023"
}

@article{Moser1966-I,
     author = {Moser, J\"urgen},
     title = {A rapidly convergent iteration method and non-linear partial differential equations - {I}},
     journal = {Annali della Scuola Normale Superiore di Pisa - Scienze Fisiche e Matematiche},
     pages = {265--315},
     publisher = {Scuola normale superiore},
     volume = {Ser. 3, 20},
     number = {2},
     year = {1966},
     mrnumber = {199523},
     language = {en},
     url = {http://www.numdam.org/item/ASNSP_1966_3_20_2_265_0/}
}

@article{Moser1966-II,
     author = {Moser, J\"urgen},
     title = {A rapidly convergent iteration method and non-linear differential equations = {II}},
     journal = {Annali della Scuola Normale Superiore di Pisa - Scienze Fisiche e Matematiche},
     pages = {499--535},
     publisher = {Scuola normale superiore},
     volume = {Ser. 3, 20},
     number = {3},
     year = {1966},
     mrnumber = {206461},
     language = {en},
     url = {http://www.numdam.org/item/ASNSP_1966_3_20_3_499_0/}
}

@article{Nash1956,
 ISSN = {0003486X},
 URL = {http://www.jstor.org/stable/1969989},
 author = {John Nash},
 journal = {Annals of Mathematics},
 number = {1},
 pages = {20--63},
 publisher = {Annals of Mathematics},
 title = {The Imbedding Problem for Riemannian Manifolds},
 urldate = {2023-07-30},
 volume = {63},
 year = {1956}
}

@article{Weinberg2018,
  title = {Essay: Half a Century of the Standard Model},
  author = {Weinberg, Steven},
  journal = {Phys. Rev. Lett.},
  volume = {121},
  issue = {22},
  pages = {220001},
  numpages = {7},
  year = {2018},
  month = {11},
  publisher = {American Physical Society},
  doi = {10.1103/PhysRevLett.121.220001},
  url = {https://link.aps.org/doi/10.1103/PhysRevLett.121.220001}
}

@article{Morris1993,
    author = "Morris, Tim R.",
    title = "{The Exact renormalization group and approximate solutions}",
    eprint = "hep-ph/9308265",
    archivePrefix = "arXiv",
    reportNumber = "CERN-TH-6977-93, SHEP-92-93-27",
    doi = "10.1142/S0217751X94000972",
    journal = "Int. J. Mod. Phys. A",
    volume = "9",
    pages = "2411--2450",
    year = "1994"
}

@article{Murro2022,
author = {Nicolas Ginoux and Simone Murro},
title = {{On the Cauchy problem for Friedrichs systems on globally hyperbolic manifolds with timelike boundary}},
volume = {27},
journal = {Advances in Differential Equations},
number = {7/8},
publisher = {Khayyam Publishing, Inc.},
pages = {497 -- 542},
year = {2022},
doi = {10.57262/ade027-0708-497},
URL = {https://doi.org/10.57262/ade027-0708-497}
}
\end{document}